\definecolor{ForestGreen}{rgb}{0.1333,0.5451,0.1333}
\definecolor{DarkRed}{rgb}{0.80,0,0}
\definecolor{Red}{rgb}{1,0,0}
\theoremstyle{plain}
\newtheorem{theorem}{Theorem}[section]
\newtheorem{lemma}[theorem]{Lemma}
\newtheorem{corollary}[theorem]{Corollary}
\theoremstyle{definition}
\newtheorem{definition}[theorem]{Definition}
\newcommand{\R}{\mathbb{R}}
\newcommand{\Z}{\mathbb{Z}}
\newcommand{\calM}{\mathcal{M}}
\newcommand{\bdv}{\boldsymbol{v}}
\title{Fair and Almost Truthful Mechanisms for Additive Valuations and Beyond}
\author{
    Biaoshuai Tao \\
    Shanghai Jiao Tong University \\
    \texttt{bstao@sjtu.edu.cn} \\
    \and
    Mingwei Yang \\
    Stanford University \\
    \texttt{mwyang@stanford.edu} \\
}
\date{}
\begin{document}
\maketitle
\date{}

\begin{abstract}
    We study the problem of fairly allocating indivisible goods among $n$ strategic agents.
    It is well-known that truthfulness is incompatible with any meaningful fairness notions.
    We bypass the strong negative result by considering the concept of \emph{incentive ratio}, a relaxation of truthfulness quantifying agents' incentive to misreport.
    Previous studies show that Round-Robin, which satisfies \emph{envy-freeness up to one good (EF1)}, achieves an incentive ratio of $2$ for additive valuations.

    In this paper, we explore the incentive ratio achievable by fair mechanisms for various classes of valuations besides additive ones.
    We first show that, for arbitrary $\epsilon > 0$, every $(\frac{1}{2} + \epsilon)$-EF1 mechanism for additive valuations admits an incentive ratio of at least $1.5$.
    Then, using the above lower bound for additive valuations in a black-box manner, we show that for arbitrary $\epsilon > 0$, every $\epsilon$-EF1 mechanism for \emph{cancelable} valuations admits an infinite incentive ratio.
    Moreover, for \emph{subadditive cancelable} valuations, we show that Round-Robin, which satisfies EF1, achieves an incentive ratio of $2$, and every $(\varphi - 1)$-EF1 mechanism admits an incentive ratio of at least $\varphi$ with $\varphi = (1 + \sqrt{5}) / 2 \approx 1.618$.
    Finally, for \emph{submodular} valuations, we show that Round-Robin, which satisfies $\frac{1}{2}$-EF1, admits an incentive ratio of $n$.

\end{abstract}
\section{Introduction}
\label{sect:intro}

In recent years, the field of discrete fair division has received extensive attention.
In the canonical model, $m$ indivisible goods, which are positively valued, are divided among a group of $n$ competing agents in a fair manner without disposal.
Arguably, the most appealing fairness notion is \emph{envy-freeness (EF)}, which is defined as each agent weakly preferring his own bundle to any other agent's bundle.
However, in the indivisible regime, EF allocations may not exist, even approximately: Consider the instance with two agents competing for one good.
Given the intractability of EF, some of its natural relaxations are \emph{envy-freeness up to one good (EF1)} proposed by \cite{DBLP:conf/sigecom/LiptonMMS04} and \cite{DBLP:conf/bqgt/Budish10}, and \emph{envy-freeness up to any good (EFX)} introduced by \cite{DBLP:journals/teco/CaragiannisKMPS19} and \cite{DBLP:conf/ecai/GourvesMT14}.
In an EF1 (EFX) allocation, agent $i$ may envy agent $j$, but the envy can be eliminated by removing one (any) good from agent $j$'s bundle.
It is known that EF1 allocations always exist and can be computed in polynomial time for general valuations \cite{DBLP:conf/sigecom/LiptonMMS04}, whereas the existence of EFX allocations remains open even for additive valuations.
See the surveys by \cite{DBLP:conf/ijcai/AmanatidisBFV22} and by \cite{DBLP:journals/sigecom/AzizLMW22} for more details of the non-strategic setting\footnote{These two surveys are later combined into one \cite{Amanatidis_2023}.}.

Nevertheless, in most practical situations, agents have no incentive to faithfully report their preferences if misreporting leads to a better outcome from their perspective.
This results in a game-theoretic perspective of the fair division problem, under which the quest for fairness becomes significantly less tractable.
Given that malicious behaviors of agents can result in severe fairness and welfare losses, a large body of research papers focus on designing mechanisms that are both truthful and fair \cite{DBLP:conf/sigecom/LiptonMMS04,DBLP:conf/aldt/CaragiannisKKK09,DBLP:conf/wine/MarkakisP11,DBLP:conf/ijcai/AmanatidisBM16}.
Unfortunately, \cite{DBLP:conf/sigecom/AmanatidisBCM17} shows that truthfulness is incompatible with any reasonable fairness concept if monetary transfers are prohibited, and this even holds for two agents with additive valuations.

Given the impossibility of combining fairness and truthfulness, the next direction to pursue is devising sufficiently fair mechanisms, which serves as the primary objective in our setting, while remaining close to truthfulness.
Even though agents are still incentivized to manipulate under the relaxed requirement for truthfulness, there still exist reasons to settle for slight relaxation.
Firstly, we measure agents' incentive to manipulate, which will be specified later, in a worst-case sense, and thus, the incentive for misreporting is very likely to be even smaller in the concerned applications.
Moreover, performing effective manipulations is costly since it requires knowing other agents' preferences, which are usually difficult to acquire, and the best response turns out to be computationally intractable even for several elementary mechanisms including Round-Robin and cut-and-choose \cite{DBLP:conf/aaai/AzizBLM17,DBLP:conf/wine/AmanatidisBFLLR21}.

As a natural relaxation of truthfulness, the notion of \emph{incentive ratio} has been widely studied under the contexts of Fisher markets \cite{DBLP:journals/iandc/ChenDTZZ22}, resource sharing \cite{DBLP:conf/ipps/ChengDL20,DBLP:conf/sigecom/ChengDLY22,DBLP:journals/dam/ChenCDQY19}, housing markets \cite{todo2019analysis}, and resource allocation \cite{DBLP:journals/jcss/HuangWWZ24,DBLP:conf/aaai/XiaoL20,DBLP:journals/corr/abs-2308-08903}.
Informally, the incentive ratio of a mechanism is defined as the worst-case ratio between the utility that an agent gains by manipulation and his utility under truthful telling.
The definition of incentive ratio is also closely related to the popular notion of \emph{approximate Nash equilibrium} \cite{DBLP:journals/sigecom/Rubinstein17,DBLP:journals/teco/CaragiannisFGS15}.

In this paper, we explore the possibility of simultaneously achieving fairness and a small incentive ratio\footnote{This resembles the concept of \emph{price of fairness} \cite{DBLP:journals/mst/BeiLMS21,DBLP:conf/wine/BarmanB020}, which captures the efficiency loss in fair allocations as opposed to the incentive loss.}.
In the most well-studied setting of additive valuations, \cite{DBLP:conf/aaai/XiaoL20} shows that Round-Robin, which satisfies EF1, admits an incentive ratio of $2$.
Inspired by the recent focus on fair division for more general valuations in both the non-strategic setting \cite{DBLP:conf/aaai/ChaudhuryGM21,DBLP:conf/aaai/BergerCFF22,DBLP:journals/teco/BarmanK20} and the strategic setting \cite{DBLP:conf/sigecom/AmanatidisBL0R23}, we also consider valuation classes that largely generalize additivity.
In particular, one of our main focuses is the class of \emph{cancelable} valuations, which generalizes budget-additive, unit-demand, and multiplicative valuations \cite{DBLP:conf/aaai/BergerCFF22} and has found its applications in various fair division results \cite{DBLP:conf/aaai/BergerCFF22,DBLP:conf/sigecom/AmanatidisBL0R23,DBLP:conf/sigecom/AkramiACGMM23}.
In addition, we also look into \emph{subadditive} and \emph{submodular} valuations, which constitute fundamental properties in combinatorial optimization and have been of recent interest in the fair division literature \cite{DBLP:conf/sigecom/AmanatidisBL0R23,DBLP:conf/aaai/ChaudhuryGM21,DBLP:journals/teco/BarmanK20}.

\subsection{Our Contributions}

In this paper, we study the incentive ratio achievable by fair mechanisms and give several positive and negative results for various classes of valuations, which are summarized in \Cref{tab:results}.
In particular, we provide the first incentive ratio lower bound strictly larger than $1$ by a constant as well as the first bounded incentive ratio upper bound beyond additive valuations.
In more detail, we describe our main contributions as follows:
\begin{itemize}
    \item For additive valuations, we show that every $(\frac{1}{2} + \epsilon)$-EF1 mechanism admits an incentive ratio of at least $1.5$, where $\epsilon > 0$ can arbitrarily depend on $n$ and $m$ (\Cref{thm:lb-12-ef1-addi}).
    This result largely improves the incentive ratio lower bound of strictly larger than $1$ by \cite{DBLP:conf/sigecom/AmanatidisBL0R23} and rules out the possibility of achieving $(\frac{1}{2} + \epsilon)$-EF1 together with an incentive ratio of arbitrarily close to $1$.

    \item For cancelable valuations, we show that every $\epsilon$-EF1 mechanism admits an infinite incentive ratio, where $\epsilon > 0$ can arbitrarily depend on $n$ and $m$ (\Cref{thm:lb-cancel-ef1}).
    In particular, our proof utilizes our incentive ratio lower bound for approximately EF1 mechanisms with additive valuations in a \emph{black-box} manner, and our result holds even for \emph{multiplicative} valuations, which constitute a subclass of cancelable valuations.

    \item We show that the impossibility result for cancelable valuations can be bypassed by the additional property of subadditivity.
    Specifically, for subadditive cancelable valuations, we show that Round-Robin, which is known to satisfy EF1 \cite{DBLP:conf/sigecom/AmanatidisBL0R23}, admits an incentive ratio of $2$ (\Cref{thm:incentive-ratio-of-round-robin-for-subadditive-cancelable-valuations}), thereby proving a separation between the cancelable and subadditive cancelable cases.
    On the negative side, we show that every $(\varphi - 1)$-EF1 mechanism for subadditive cancelable valuations admits an incentive ratio of at least $\varphi$ with $\varphi = (1 + \sqrt{5}) / 2 \approx 1.618$ (\Cref{thm:lb-ef1-subaddi-cancel}), improving the lower bound of $1.5$ given under the additive case.

    \item For submodular valuations, we show that a generalization of Round-Robin, which is proved to satisfy $\frac{1}{2}$-EF1 \cite{DBLP:conf/sigecom/AmanatidisBL0R23}, admits an incentive ratio of $n$ (\Cref{thm:incentive-ratio-of-round-robin-for-submodular-valuations}).
\end{itemize}

\begin{table}[t]
\centering
\begin{tabular}{@{}lllll@{}}
\toprule
Valuations             &  & Fairness &  & Incentive Ratio \\ \midrule
Additive               &  & EF1      &  & $[1.5, 2]$    \\
Subadditive Cancelable             &  & EF1      &  & $[\varphi, 2]$         \\
Cancelable &  & $\epsilon$-EF1      &  & $\infty$    \\
Submodular             &  & $1/2$-EF1  &  & $(1, n]$        \\ \bottomrule
\end{tabular}
\caption{Results for the incentive ratio achievable by mechanisms satisfying a specified fairness criterion for a certain class of valuations, where $\varphi = (1 + \sqrt{5}) / 2 \approx 1.618$ and $\epsilon > 0$ denotes a small real number arbitrarily depending on $n$ and $m$. The upper bound of $2$ for additive valuations is shown by \cite{DBLP:conf/aaai/XiaoL20}, and the lower bound of strictly larger than $1$ for submodular valuations is implied by the negative result of \cite{DBLP:conf/sigecom/AmanatidisBCM17} for additive valuations. The remaining bounds in the table are proved in this paper. In particular, the lower bound of $1.5$ for additive valuations holds for $(\frac{1}{2} + \epsilon)$-EF1, and the lower bound of $\varphi$ for subadditive cancelable valuations holds for $(\varphi - 1)$-EF1.}
\label{tab:results}
\end{table}

Finally, although Round-Robin is known to be prominent with additive valuations in both the non-strategic and strategic settings (see \Cref{sec:related-work}), its properties for more general valuations are less explored \cite{DBLP:conf/ecai/BouveretL14,DBLP:conf/sigecom/AmanatidisBL0R23}.
As a by-product, our positive results, which are all established via Round-Robin, further characterize the incentive guarantees of Round-Robin beyond additive valuations.

\subsection{Related Work}
\label{sec:related-work}

The mechanism design aspect of fair division is also extensively studied when resources are \emph{divisible}, which is out of the scope of this paper, and we refer to the recent paper \cite{DBLP:journals/ai/BuST23} and the references therein for a more comprehensive overview.
It is worth mentioning that in divisible resource allocations, the \emph{Maximum Nash Welfare (MNW)} and \emph{Probabilistic Serial (PS)} rules, which satisfy multiple fairness and efficiency properties, are shown to admit an incentive ratio of $2$ \cite{DBLP:journals/iandc/ChenDTZZ22,DBLP:conf/atal/0037SX24,DBLP:journals/corr/abs-2308-08903,DBLP:journals/jcss/HuangWWZ24}.
Hence, by \emph{implementing} the fractional allocations induced by the MNW or PS rules over certain integral fair allocations, randomized fair mechanisms for indivisible resources are obtained, which satisfy desirable \emph{ex-ante} incentive ratio guarantees promised by the fractional allocations \cite{DBLP:conf/sigecom/Freeman0V20,DBLP:conf/wine/Aziz20}.

Besides incentive ratio, various paradigms for bypassing the strong impossibility of combining truthfulness and fairness are also proposed.
Several recent papers \cite{DBLP:conf/wine/AmanatidisBFLLR21,DBLP:conf/sigecom/AmanatidisBL0R23} initiate the study of \emph{equilibrium fairness}, which explores the fairness guarantees of the allocations induced by \emph{pure Nash equilibria (PNE)} with respect to the underlying true valuations.
\cite{DBLP:conf/wine/AmanatidisBFLLR21} shows that Round-Robin achieves desirable equilibrium fairness properties for additive valuations.
Later on, \cite{DBLP:conf/sigecom/AmanatidisBL0R23} generalizes the equilibrium fairness guarantees of Round-Robin to cancelable and submodular valuations.
In addition, other relaxed notions of truthfulness are proposed, including the \emph{ex-ante truthfulness}~\cite{mossel2010truthful,bu2024truthfulenvyfreemechanismallocating}, \emph{maximin strategy-proofness}~\cite{brams2006better}, \emph{non-obvious manipulability}~\cite{DBLP:conf/nips/0001V22,troyan2020obvious,ortega2022obvious}, and \emph{risk-averse truthfulness}~\cite{DBLP:journals/ai/BuST23}.
Finally, another series of research considers the restricted category of dichotomous valuations and aims to design truthful mechanisms accompanied by desirable fairness and efficiency properties \cite{DBLP:conf/aaai/BabaioffEF21,DBLP:conf/wine/0002PP020,DBLP:conf/aaai/BarmanV22}.

Apart from its desirable incentive ratio and equilibrium fairness guarantees mentioned previously, Round-Robin appears as an essential tool for various fair division problems with additive valuations.
Without strategic agents, its variants are applied to produce approximate \emph{maximum share fair} allocations \cite{DBLP:journals/talg/AmanatidisMNS17,DBLP:journals/teco/BarmanK20}, EF1 allocations for \emph{mixed goods and chores} (i.e., items with negative values) \cite{DBLP:journals/aamas/AzizCIW22}, and more.
In the strategic setting, \cite{DBLP:conf/nips/0001V22} shows that Round-Robin is not obviously manipulable, and \cite{DBLP:journals/corr/abs-2306-02040} establishes that a variant of Round-Robin is \emph{Bayesian incentive compatible} when agents’ priors satisfy a neutrality condition.
\section{Preliminaries}

As conventions, given a mapping $f:X \to Y$, let $f^{-1}(y) = \{x \in X \mid f(x) = y\}$ for every $y \in Y$ and $f(X') = \{f(x) \mid x \in X'\}$ for every $X' \subseteq X$.

Let $G = \{g_1, \ldots, g_m\}$ denote the set of $m$ goods and $N = [n]$ be the set of $n$ agents.
A \emph{bundle} is a subset of $G$.
An \emph{allocation} $A = (A_1, \ldots, A_n)$ is defined as a partition of $G$ satisfying $A_i \cap A_j = \emptyset$ for all $i \neq j$ and $\bigcup_{i=1}^n A_i = G$, where $A_i$ denotes the bundle received by agent $i$. 

We assume that each agent $i$ is associated with a non-negative valuation $v_i(G')$ for each set of goods $G' \subseteq G$; for convenience, we write $v_i(g)$, $v_i(S - g)$ and $v_i(S + g)$ instead of $v_i(\{g\})$, $v_i(S \setminus \{g\})$ and $v_i(S \cup \{g\})$.
We assume that every $v_i$ is \emph{normalized}, i.e., $v_i(\emptyset) = 0$, and \emph{monotone}, i.e., $v_i(S) \leq v_i(T)$ for all $S \subseteq T \subseteq G$.
We also adopt the shortcut $v_i(T \mid S)$ for the \emph{marginal value} of a set of goods $T$ with respect to a set of goods $S$, i.e., $v_i(T \mid S) = v_i(T \cup S) - v_i(S)$.
For each agent $i$, we say that $v_i$ is
\begin{itemize}
    \item \emph{subadditive}, if $v_i(S \cup T) \leq v_i(S) + v_i(T)$ for all $S, T \subseteq G$.
    \item \emph{submodular}, if $v_i(g \mid S) \geq v_i(g \mid T)$ for all $S \subseteq T \subseteq G$ and $g \in G \setminus T$.
    \item \emph{cancelable}, if $v_i(S + g) > v_i(T + g) \implies v(S) > v(T)$ for all $S, T \subseteq G$ and $g \in G \setminus (S \cup T)$.
    \item \emph{additive}, if $v_i(S \cup T) = v_i(S) + v_i(T)$ for all $S, T \subseteq G$ with $S \cap T = \emptyset$.
\end{itemize}
Note that although both submodular and (subadditive) cancelable valuations are strict superclasses of additive valuations, neither one is a superclass of the other \cite{DBLP:conf/sigecom/AmanatidisBL0R23}.
Given an allocation $A$, define the \emph{utility} of agent $i$ as $v_i(A_i)$.

We define the fairness notion considered in this paper as follows.

\begin{definition}[$\alpha$-EF1]
    An allocation $A = (A_1, \ldots, A_n)$ is said to satisfy \emph{$\alpha$-envy-freeness up to one good ($\alpha$-EF1)} for $\alpha \in [0, 1]$ if for all $i, j \in N$, either $A_j = \emptyset$ or there exists $g \in A_j$ such that $v_i(A_i) \geq \alpha \cdot v_i(A_j - g)$.
    If $A$ satisfies $1$-EF1, we simply say that $A$ satisfies EF1.
\end{definition}

A \emph{mechanism} $\calM$ takes a valuation profile $\bdv = (v_1, \ldots, v_n)$ as input, and outputs an allocation $\calM(\bdv) = (\calM_1(\bdv), \ldots, \calM_n(\bdv))$, where $\calM_i(\bdv)$ denotes the bundle received by agent $i$.
Each agent has an underlying true valuation and is required to report a (possibly fake) valuation to the mechanism.
We adopt the notion of \emph{incentive ratio} to quantify the degree of untruthfulness of a mechanism.

\begin{definition}[Incentive Ratio]
    The \emph{incentive ratio} of a mechanism $\calM$ is defined as
    \begin{align*}
        \sup_{n,m} \sup_{v_1, \ldots, v_n} \sup_{i \in [n]} \sup_{v_i'} \frac{v_i(\calM_i(v_1, \ldots, v_i', \ldots, v_n))}{v_i(\calM_i(v_1, \ldots, v_i, \ldots, v_n))}.
    \end{align*}
\end{definition}

Observe that the incentive ratio of every mechanism is at least $1$ by setting $v_i' = v_i$.
If the incentive ratio of a mechanism is $1$, then we say that the mechanism is \emph{truthful}.

\subsection{Strongly Desire and Control}

Recall that \cite{DBLP:conf/sigecom/AmanatidisBCM17} proposes the notions of \emph{strongly desire} and \emph{control} in the context of truthfulness with two agents.
We generalize these notions and accommodate them to the concept of incentive ratio.
In this subsection, we assume that there are $n = 2$ agents and do not make any restrictions on valuations except for the default that they are normalized and monotone.

For $\alpha \geq 1$, we say that an agent $i$ \emph{$\alpha$-strongly desires} a good $g$ if he values $g$ strictly more than all goods in $G \setminus \{g\}$ combined multiplying by $\alpha$, i.e., $v_i(g) > \alpha \cdot v_i(G - g)$.
Next, we define the notion of $\alpha$-control.

\begin{definition}[$\alpha$-Control]
    Given a mechanism $\calM$ and $\alpha \geq 1$, we say that an agent $i$ \emph{$\alpha$-controls} a good $g$ with respect to $\calM$ if for every profile $\bdv$ where agent $i$ $\alpha$-strongly desires $g$, $g \in \calM_i(\bdv)$.
\end{definition}

Given a mechanism $\calM$ and $\alpha \geq 1$, every good $g$ is $\alpha$-controlled by at most one agent with respect to $\calM$ since when both agents $\alpha$-strongly desire $g$, only one agent can receive it.
Moreover, assuming that $\calM$ admits an incentive ratio of $\alpha$, we show in the following lemma that every good is $\alpha$-controlled by exactly one agent with respect to $\calM$.

\begin{lemma}\label{lem:control-lemma}
    Given a mechanism $\calM$ with an incentive ratio of $\alpha \geq 1$, every $g \in G$ is $\alpha$-controlled by exactly one agent with respect to $\calM$.
\end{lemma}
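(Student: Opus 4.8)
The plan is to prove the ``exactly one'' claim by combining the ``at most one'' observation already recorded above with a proof that every good is $\alpha$-controlled by \emph{at least} one agent. I would argue the latter by contradiction: suppose some good $g$ is $\alpha$-controlled by neither agent $1$ nor agent $2$. Negating the definition of $\alpha$-control yields two witness profiles. From agent $1$ failing to control $g$, there is a profile $\bdv^{(1)} = (v_1^{(1)}, v_2^{(1)})$ in which agent $1$ $\alpha$-strongly desires $g$ yet $g \notin \calM_1(\bdv^{(1)})$; since $n = 2$ and the allocation partitions $G$, this forces $g \in \calM_2(\bdv^{(1)})$. Symmetrically, from agent $2$ failing to control $g$, there is a profile $\bdv^{(2)} = (v_1^{(2)}, v_2^{(2)})$ in which agent $2$ $\alpha$-strongly desires $g$ yet $g \in \calM_1(\bdv^{(2)})$.

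The heart of the argument is to splice these two witnesses into the single profile $\bdv = (v_1^{(1)}, v_2^{(2)})$, in which \emph{both} agents $\alpha$-strongly desire $g$. Exactly one of them receives $g$ under $\calM(\bdv)$, giving two cases. Suppose first that $g \in \calM_1(\bdv)$. I would then regard agent $2$ as having true valuation $v_2^{(2)}$ and keep agent $1$'s report fixed at $v_1^{(1)}$. Reporting truthfully, agent $2$ does not obtain $g$, so by monotonicity his utility is at most $v_2^{(2)}(G - g)$; but deviating to the report $v_2^{(1)}$ (while agent $1$ stays at $v_1^{(1)}$) lands both agents in the profile $\bdv^{(1)}$, where agent $2$ \emph{does} receive $g$, and hence gains utility at least $v_2^{(2)}(g)$ by monotonicity. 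Since agent $2$ $\alpha$-strongly desires $g$ under $v_2^{(2)}$, we have $v_2^{(2)}(g) > \alpha \cdot v_2^{(2)}(G - g)$, so this deviation improves agent $2$'s utility by a factor strictly exceeding $\alpha$, contradicting the incentive ratio of $\calM$. The case $g \in \calM_2(\bdv)$ is handled symmetrically, now with agent $1$'s true valuation $v_1^{(1)}$, agent $2$'s report fixed at $v_2^{(2)}$, and the profitable deviation sending $v_1^{(1)}$ to $v_1^{(2)}$, which lands both agents in $\bdv^{(2)}$, where agent $1$ receives $g$.

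The step I expect to require the most care is the splicing construction itself: the incentive-ratio bound only constrains a \emph{single} agent's unilateral deviation while every other report is held fixed, so the two witness profiles must be combined so that in each case the truthful profile and the deviating profile differ in exactly one agent's report and share a common report for the other agent. The key realization is to pair agent $1$'s own witness report $v_1^{(1)}$ with agent $2$'s own witness report $v_2^{(2)}$ in the spliced profile, which leaves precisely the ``leftover'' witness report ($v_2^{(1)}$ or $v_1^{(2)}$) available as the profitable deviation target. Once the profiles are aligned in this way, the remaining inequalities follow immediately from the definition of $\alpha$-strong desire together with monotonicity, using that a bundle omitting $g$ has value at most $v_i(G - g)$ while a bundle containing $g$ has value at least $v_i(g)$.
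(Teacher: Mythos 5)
Your proof is correct and rests on the same mechanism as the paper's: splicing reports from two profiles that share one agent's report, and extracting a unilateral deviation that, by $\alpha$-strong desire and monotonicity, improves the deviating agent's utility by a factor exceeding $\alpha$. The only difference is organizational --- the paper argues directly (fixing a profile where both agents $\alpha$-strongly desire $g$ and showing the recipient controls $g$ via a two-step chain through an intermediate profile), whereas you argue by contradiction from two failure witnesses; both arguments are valid and of essentially the same complexity.
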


\begin{proof}
    Let $\bdv = (v_1, v_2)$ be a profile where both agents $\alpha$-strongly desire $g$.
    Assume without loss of generality that $g \in \calM_1(\bdv)$, and we show that $g$ is $\alpha$-controlled by agent $1$ with respect to $\calM$.
    Let $\bdv' = (v_1', v_2')$ be an arbitrary profile in which agent $1$ $\alpha$-strongly desires $g$, and we aim to show that $g \in \calM_1(\bdv')$.
    Initially, consider the intermediate profile $\bdv^* = (v_1, v_2')$.
    If $g \in \calM_2(\bdv^*)$, then agent $2$ would deviate from $\bdv$ to $\bdv^*$ to improve his utility in $\bdv$ by strictly more than $\alpha$ times.
    Hence, by the incentive ratio $\alpha$ of $\calM$, $g \in \calM_1(v^*)$.
    Similarly, if $g \in \calM_2(\bdv')$, then agent $1$ would deviate from $\bdv'$ to $\bdv^*$ to improve his utility in $\bdv'$ by strictly more than $\alpha$ times.
    Hence, by the incentive ratio $\alpha$ of $\calM$, $g \in \calM_1(\bdv')$, concluding that agent $1$ $\alpha$-controls $g$ with respect to $\calM$.
\end{proof}

\section{Additive Valuations}

In this section, we consider additive valuations and show an incentive ratio lower bound of $1.5$ for $(\frac{1}{2} + \epsilon)$-EF1 mechanisms, where $\epsilon > 0$ can arbitrarily depend on $n$ and $m$.

\begin{theorem}\label{thm:lb-12-ef1-addi}
    Every $(\frac{1}{2} + \epsilon)$-EF1 mechanism for additive valuations admits an incentive ratio of at least $1.5$, where $\epsilon > 0$ can arbitrarily depend on $n$ and $m$.
\end{theorem}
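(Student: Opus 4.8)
The plan is to argue by contradiction and to exploit that the incentive ratio is a supremum over the number of agents, so it suffices to defeat a mechanism on an instance with $n = 2$. Assume $\calM$ is $(\frac{1}{2} + \epsilon)$-EF1 and has incentive ratio $\alpha < \frac{3}{2}$. By \Cref{lem:control-lemma}, every good is then $\alpha$-controlled by exactly one of the two agents, yielding a partition $G = S_1 \sqcup S_2$ in which agent $i$ is guaranteed to receive any good of $S_i$ that it $\alpha$-strongly desires, \emph{regardless of the other agent's report}. The whole argument plays this global (profile-independent) control guarantee against the fairness constraint on a carefully tuned additive profile, and the case analysis is driven by the partition $(S_1,S_2)$.

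The template I would use for each case is to pick a true profile $\bdv$ together with a manipulation $v_i'$ and to bound the two sides separately. On the truthful side I would use $(\frac{1}{2}+\epsilon)$-EF1 as an \emph{upper} bound: by aligning the two agents' values on the relevant goods, the opponent's envy constraint, $v_j(A_j)\geq(\frac{1}{2}+\epsilon)\big(v_j(A_i)-\max_{g\in A_i}v_j(g)\big)$, caps the manipulator's true utility at some value $V$ in every $(\frac{1}{2}+\epsilon)$-EF1 allocation, so that no matter which fair allocation $\calM$ outputs or how it breaks ties we have $v_i(\calM_i(\bdv))\leq V$. On the manipulated side I would use the control lemma as a \emph{lower} bound: reporting $v_i'$ that $\alpha$-strongly desires a suitable good of $S_i$ pins that good to agent $i$ in $\calM(v_i',\bdv_{-i})$, and the construction arranges that the opponent's envy boundary now sits at a looser point (because the good secured by agent $i$ is exactly the opponent's most valued good, which is removed in the EF1 test), forcing $v_i(\calM_i(v_i',\bdv_{-i}))\geq \frac{3}{2}V$. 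Dividing the two bounds contradicts $\alpha<\frac{3}{2}$.

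The main obstacle, and the place where the threshold $\frac{1}{2}+\epsilon$ must be calibrated exactly, is making the truthful upper bound $V$ robust against the mechanism's freedom. Because a single good is always ``removable'' in the EF1 test, one cannot forbid the manipulator from fairly holding any singleton it values, and the opponent's envy only constrains the opponent's value of the manipulator's bundle; hence the capping profile must pin the manipulator into a genuine \emph{multi-good} configuration, so that the opponent's envy (which does not depend on the manipulator's report) strictly bounds the manipulator's true value by $V$ across all fair allocations while leaving a controlled good through which the deviation reopens exactly $\frac{3}{2}V$ of value. I expect the delicate step to be choosing the good values so that the opponent's $(\frac{1}{2}+\epsilon)$-envy boundary lands precisely at ratio $\frac{3}{2}$; since $\epsilon$ is allowed to depend on $n$ and $m$, I would accommodate this by letting the number of tuned goods grow with $\epsilon$ so that the bound $\frac{3}{2}$ is attained in the limit.

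Finally I would verify that the case analysis over $(S_1,S_2)$ is exhaustive: up to relabeling the two agents and the goods it reduces to a small number of essentially different partition types, and in each the same two-sided template applies, with the roles of manipulator and opponent chosen according to who controls the pivotal good. Checking that the constructed valuations are additive, normalized, and monotone, and that the claimed allocations do (respectively do not) satisfy $(\frac{1}{2}+\epsilon)$-EF1, completes the contradiction and hence the lower bound of $1.5$.
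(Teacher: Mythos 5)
There is a genuine gap, and it sits exactly at the step you flag as ``delicate.'' Your template asks for a single true profile $\bdv$ on which $(\frac{1}{2}+\epsilon)$-EF1 \emph{alone} caps the manipulator's truthful utility at $V$ over \emph{all} fair allocations, together with a single deviation that secures one $\alpha$-controlled good $g^*$ worth at least $\frac{3}{2}V$. This cannot be arranged: the allocation that gives agent $i$ only the singleton $\{g^*\}$ and everything else to the opponent always passes the opponent's $(\frac{1}{2}+\epsilon)$-EF1 test (remove $g^*$ itself), and for large $v_i(g^*)$ it passes agent $i$'s test too, so any cap $V$ valid for every fair allocation must satisfy $V \geq v_i(g^*)$. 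Combined with your requirement $v_i(g^*) \geq \frac{3}{2}V$ this forces $V \leq 0$, i.e.\ the two-sided bound collapses. You notice the singleton-removability problem yourself, but ``pinning the manipulator into a multi-good configuration'' is not something EF1 can do at a single profile --- and neither is letting the number of goods grow with $\epsilon$; the mechanism always retains the freedom to hand over one expensive good.

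What is missing is the device the paper actually uses: the mechanism's output is pinned down not by fairness alone but by a \emph{chain} of six hybrid profiles ($n=2$, $m=7$, four relevant goods), where at each step a deviation by \emph{one of the two agents} (not only the manipulator) to a previously analyzed profile would yield a gain exceeding $\alpha$, forcing the allocation to stay put. The constant $1.5$ enters through an intermediate pinning step in which agent $2$ would otherwise gain a factor of exactly $\frac{3}{2}$ by switching between two already-determined profiles; the final contradiction is then an \emph{unbounded} gain (a factor $\delta/2$ with $\delta \geq 5$ arbitrary), not a calibrated $\frac{3}{2}$ ratio. The role of $\epsilon$ is also different from what you propose: it is used only to force an exact $2$--$2$ split in a symmetric all-ones profile and to forbid one agent from holding two goods each worth $\delta$ while the other holds two goods worth $1$; no growing family of goods is needed. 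Your identification of the contradiction setup, the reduction to $n=2$, and the control-lemma partition is correct, but without the multi-profile pinning argument the proof does not go through.
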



The rest of this section is devoted to proving \Cref{thm:lb-12-ef1-addi}, for which we construct a series of profiles and show that $(\frac{1}{2} + \epsilon)$-EF1 and an incentive ratio of strictly smaller than $1.5$ cannot be simultaneously guaranteed in all these profiles.
Assume for contradiction that a $(\frac{1}{2} + \epsilon)$-EF1 mechanism $\calM$ for additive valuations exists with an incentive ratio of $\alpha$ satisfying $1 \leq \alpha < 1.5$.
Suppose that there are $n = 2$ additive agents and $m = 7$ goods.
For every $i \in [2]$, denote $N_i$ as the set of goods $\alpha$-controlled by agent $i$ with respect to $\calM$.
By \Cref{lem:control-lemma}, $(N_1, N_2)$ forms a partition of $G$.
Without loss of generality, assume that $|N_1| \geq 4$ and $\{g_1, g_2, g_3, g_4\} \subseteq N_1$.
Denote $G' = \{g_1, g_2, g_3, g_4\}$, and every constructed additive valuation $v$ in the proof will satisfy $v(G \setminus G') = 0$.
For simplicity, we assume goods in $G \setminus G'$ always to be assigned to agent $1$, and we omit them when describing valuations and allocations.

We start with the profile $\bdv^{(0)} = (v_1, v_2)$ where

\begin{table}[H]
\centering
\begin{tabular}{@{}cccccc@{}}
\toprule
\multicolumn{1}{l}{}          &       & $g_1$ & $g_2$ & $g_3$ & $g_4$ \\ \midrule
\multirow{2}{*}{$\bdv^{(0)}$} & $v_1$ & $1$   & $1$   & $1$   & $1$   \\
                              & $v_2$ & $1$   & $1$   & $1$   & $1$   \\ \bottomrule
\end{tabular}
\end{table}

\noindent By the $(\frac{1}{2} + \epsilon)$-EF1 property of $\calM$, $|\calM_1(\bdv^{(0)})| = |\calM_2(\bdv^{(0)})| = 2$.
Without loss of generality, assume that $\calM_1(\bdv^{(0)}) = \{g_1, g_2\}$ and $\calM_2(\bdv^{(0)}) = \{g_3, g_4\}$.

Let $\delta$ be an arbitrary real number with $\delta \geq 5$, and we consider the next profile $\bdv^{(1)} = (v_1', v_2)$ where

\begin{table}[H]
\centering
\begin{tabular}{@{}cccccc@{}}
\toprule
\multicolumn{1}{l}{}          &       & $g_1$ & $g_2$ & $g_3$ & $g_4$ \\ \midrule
\multirow{2}{*}{$\bdv^{(1)}$} & $v_1'$ & $\delta$   & $1.5 \delta$   & $0$   & $0$   \\
                              & $v_2$ & $1$   & $1$   & $1$   & $1$   \\ \bottomrule
\end{tabular}
\end{table}

\noindent We claim that $\calM_1(\bdv^{(1)}) = \{g_1, g_2\}$ and $\calM_2(\bdv^{(1)}) = \{g_3, g_4\}$.
Firstly, by the $(\frac{1}{2} + \epsilon)$-EF1 property of $\calM$, $|\calM_2(\bdv^{(1)})| \geq 2$.
Moreover, if $|\{g_1, g_2\} \cap \calM_1(\bdv^{(1)})| < 2$, by deviating from $\bdv^{(1)}$ to $\bdv^{(0)}$, agent $1$ can increase his utility in $\bdv^{(1)}$ by a factor of
\begin{align*}
    \frac{v_1'(\calM_1(\bdv^{(0)}))}{v_1'(\calM_1(\bdv^{(1)}))}
    \geq \frac{v_1'(\{g_1, g_2\})}{v_1'(g_2)}
    = \frac{2.5}{1.5}
    > \alpha,
\end{align*}
violating the incentive ratio $\alpha$ of $\calM$.
Hence, $\{g_1, g_2\} \subseteq \calM_1(\bdv^{(1)})$, and it follows that $\calM_1(\bdv^{(1)}) = \{g_1, g_2\}$ and $\calM_2(\bdv^{(1)}) = \{g_3, g_4\}$.

We proceed to the next profile $\bdv^{(2)} = (v_1'', v_2)$ where

\begin{table}[H]
\centering
\begin{tabular}{@{}cccccc@{}}
\toprule
\multicolumn{1}{l}{}          &       & $g_1$ & $g_2$ & $g_3$ & $g_4$ \\ \midrule
\multirow{2}{*}{$\bdv^{(2)}$} & $v_1''$ & $1.5 \delta$   & $\delta$   & $0$   & $0$   \\
                              & $v_2$ & $1$   & $1$   & $1$   & $1$   \\ \bottomrule
\end{tabular}
\end{table}

\noindent Analogous to $\bdv^{(1)}$, we can show that $\calM_1(\bdv^{(2)}) = \{g_1, g_2\}$ and $\calM_2(\bdv^{(2)}) = \{g_3, g_4\}$.

In the next profile, we modify the valuation of agent $2$.
Define $\bdv^{(3)} = (v_1', v_2')$ where

\begin{table}[H]
\centering
\begin{tabular}{@{}cccccc@{}}
\toprule
\multicolumn{1}{l}{}          &       & $g_1$ & $g_2$ & $g_3$ & $g_4$ \\ \midrule
\multirow{2}{*}{$\bdv^{(3)}$} & $v_1'$ & $\delta$   & $1.5 \delta$   & $0$   & $0$   \\
                              & $v_2'$ & $0$   & $\delta$   & $1$   & $1$   \\ \bottomrule
\end{tabular}
\end{table}

\noindent We claim that $\calM_1(\bdv^{(3)}) = \{g_1, g_2\}$ and $\calM_2(\bdv^{(3)}) = \{g_3, g_4\}$.
Firstly, since agent $1$ $\alpha$-strongly desires $g_2$ in $\bdv^{(3)}$, $g_2 \in \calM_1(\bdv^{(3)})$ by the assumption that agent $1$ $\alpha$-controls $g_2$ with respect to $\calM$.
Moreover, if $|\{g_3, g_4\} \cap \calM_2(\bdv^{(3)})| < 2$, by deviating from $\bdv^{(3)}$ to $\bdv^{(1)}$, agent $2$ can increase his utility in $\bdv^{(3)}$ by a factor of
\begin{align*}
    \frac{v_2'(\calM_2(\bdv^{(1)}))}{v_2'(\calM_2(\bdv^{(3)}))}
    \geq 2
    > \alpha,
\end{align*}
violating the incentive ratio $\alpha$ of $\calM$.
Hence, $\{g_3, g_4\} \subseteq \calM_2(\bdv^{(3)})$.
Finally, if $|\calM_2(\bdv^{(3)})| > 2$, by deviating from $\bdv^{(1)}$ to $\bdv^{(3)}$, agent $2$ can increase his utility in $\bdv^{(1)}$ by a factor of
\begin{align*}
    \frac{v_2(\calM_2(\bdv^{(3)}))}{v_2(\calM_2(\bdv^{(1)}))}
    \geq \frac{3}{2}
    > \alpha,
\end{align*}
violating the incentive ratio $\alpha$ of $\calM$.
As a result, $|\calM_2(\bdv^{(3)})| \leq 2$, and it follows that $\calM_1(\bdv^{(3)}) = \{g_1, g_2\}$ and $\calM_2(\bdv^{(3)}) = \{g_3, g_4\}$.

In the next profile, we manage to allocate $g_2$ to agent $2$.
Define $\bdv^{(4)} = (v_1'', v_2'')$ where

\begin{table}[H]
\centering
\begin{tabular}{@{}cccccc@{}}
\toprule
\multicolumn{1}{l}{}          &       & $g_1$ & $g_2$ & $g_3$ & $g_4$ \\ \midrule
\multirow{2}{*}{$\bdv^{(4)}$} & $v_1''$ & $1.5 \delta$   & $\delta$   & $0$   & $0$   \\
                              & $v_2''$ & $\delta$   & $\delta$   & $1$   & $1$   \\ \bottomrule
\end{tabular}
\end{table}

\noindent We claim that $g_1 \in \calM_1(\bdv^{(4)})$ and $g_2 \in \calM_2(\bdv^{(4)})$.
Firstly, since agent $1$ $\alpha$-strongly desires $g_1$ in $\bdv^{(4)}$, $g_1 \in \calM_1(\bdv^{(4)})$ by the assumption that agent $1$ $\alpha$-controls $g_1$ with respect to $\calM$.
Moreover, if $\{g_1, g_2\} \subseteq \calM_1(\bdv^{(4)})$, then $\calM(\bdv^{(4)})$ is not $(\frac{1}{2} + \epsilon)$-EF1 for agent $2$, violating the $(\frac{1}{2} + \epsilon)$-EF1 property of $\calM$.
Hence, $|\{g_1, g_2\} \cap \calM_1(\bdv^{(4)})| \leq 1$, and it follows that $g_1 \in \calM_1(\bdv^{(4)})$ and $g_2 \in \calM_2(\bdv^{(4)})$.

We present our final profile to derive a contradiction.
Define $\bdv^{(5)} = (v_1'', v_2')$ where

\begin{table}[H]
\centering
\begin{tabular}{@{}cccccc@{}}
\toprule
\multicolumn{1}{l}{}          &       & $g_1$ & $g_2$ & $g_3$ & $g_4$ \\ \midrule
\multirow{2}{*}{$\bdv^{(5)}$} & $v_1''$ & $1.5 \delta$   & $\delta$   & $0$   & $0$   \\
                              & $v_2'$ & $0$   & $\delta$   & $1$   & $1$   \\ \bottomrule
\end{tabular}
\end{table}

\noindent Firstly, if $|\{g_1, g_2\} \cap \calM_1(\bdv^{(5)})| < 2$, by deviating from $\bdv^{(5)}$ to $\bdv^{(3)}$, agent $1$ can increase his utility in $\bdv^{(5)}$ by a factor of
\begin{align*}
    \frac{v_1''(\calM_1(\bdv^{(3)}))}{v_1''(\calM_1(\bdv^{(5)}))}
    \geq \frac{v_1''(\{g_1, g_2\})}{v_1''(\{g_1\})}
    = \frac{2.5}{1.5}
    > \alpha,
\end{align*}
violating the incentive ratio $\alpha$ of $\calM$.
Hence, $\{g_1, g_2\} \subseteq \calM_1(\bdv^{(5)})$.
However, by deviating from $\bdv^{(5)}$ to $\bdv^{(4)}$, agent $2$ can increase his utility in $\bdv^{(5)}$ by a factor of
\begin{align*}
    \frac{v_2'(\calM_2(\bdv^{(4)}))}{v_2'(\calM_2(\bdv^{(5)}))}
    \geq \frac{v_2'(g_2)}{v_2'(\{g_3, g_4\})}
    = \frac{\delta}{2}
    > \alpha,
\end{align*}
violating the incentive ratio $\alpha$ of $\calM$.
This concludes the proof of \Cref{thm:lb-12-ef1-addi}.

\section{Cancelable Valuations}

In this section, we consider cancelable valuations and show that every $\epsilon$-EF1 mechanism admits an infinite incentive ratio, where $\epsilon>0$ can arbitrarily depend on $n$ and $m$.

\begin{theorem}\label{thm:lb-cancel-ef1}
    Every $\epsilon$-EF1 mechanism for cancelable valuations admits an infinite incentive ratio, where $\epsilon > 0$ can arbitrarily depend on $n$ and $m$.
\end{theorem}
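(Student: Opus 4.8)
The plan is to argue by contradiction: suppose some $\epsilon$-EF1 mechanism $\calM$ for cancelable valuations has a \emph{finite} incentive ratio $\alpha$. I would use $\calM$ to construct an additive mechanism whose fairness and incentive guarantees contradict \Cref{thm:lb-12-ef1-addi}, which is exactly the black-box use of the additive lower bound. The reduction passes through \emph{multiplicative} valuations, a subclass of cancelable ones: fix a parameter $c > 0$, and for an additive profile $\bdu = (u_1, \ldots, u_n)$ with strictly positive single-good values, set $v_i(S) := e^{c\, u_i(S)} - 1 = \prod_{g \in S} e^{c\, u_i(g)} - 1$. Each $v_i$ is normalized, monotone, and cancelable, since the common factor $e^{c\,u_i(g)}$ cancels in the defining inequality $v_i(S+g) > v_i(T+g) \implies v_i(S) > v_i(T)$. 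Define the additive mechanism $\calM'_c$ by $\calM'_c(\bdu) := \calM(v_1, \ldots, v_n)$. The whole idea is that the logarithm turns these multiplicative valuations back into the additive profile $\bdu$ up to lower-order terms, so that the strong (ratio-based) guarantees of $\calM$ become additive-difference guarantees for $\calM'_c$ that vanish as $c \to \infty$.

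The two things to extract are that $\calM'_c$ is approximately fair and has incentive ratio below $1.5$ once $c$ is large. For fairness, writing $A = \calM(\bdv)$, the $\epsilon$-EF1 property of $\calM$ gives, for each pair $i,j$ and the best good $g$ to delete, $e^{c\,u_i(A_i)} - 1 \geq \epsilon\,(e^{c\,u_i(A_j - g)} - 1)$, and taking logarithms yields $u_i(A_i) \geq u_i(A_j - g) - \frac{\ln(1/\epsilon)}{c}$, an additive-EF1 bound with slack tending to $0$. On a domain where every nonempty bundle has value bounded below, this vanishing additive slack upgrades to genuine $(\frac{1}{2} + \epsilon')$-EF1 for a fixed $\epsilon' > 0$ once $c$ is large. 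For incentives, the incentive ratio $\alpha$ of $\calM$ forces $e^{c(u_i(A_i') - u_i(A_i))} \leq \alpha$, so the additive \emph{difference} satisfies $u_i(A_i') - u_i(A_i) \leq \frac{\ln \alpha}{c}$; since the denominator $u_i(A_i)$ is bounded below, the additive \emph{ratio} $u_i(A_i')/u_i(A_i)$ is at most $1 + \frac{\ln\alpha}{c\,u_i(A_i)} < 1.5$ for $c$ large. Thus for $c$ large enough $\calM'_c$ is a $(\frac{1}{2} + \epsilon')$-EF1 additive mechanism with incentive ratio strictly below $1.5$, contradicting \Cref{thm:lb-12-ef1-addi}; as no finite $\alpha$ survives, $\calM$ must have an infinite incentive ratio.

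The hard part will be reconciling the multiplicative world of cancelable valuations with the additive world of \Cref{thm:lb-12-ef1-addi}: $\epsilon$-EF1 is a ratio condition under which deleting a single good can divide a bundle's value by an arbitrarily large factor, whereas the guarantee inherited by $\calM'_c$ controls only additive differences of log-values. Making the translation rigorous requires (i) choosing the domain of additive valuations carefully --- positive and bounded away from $0$ --- so that the vanishing additive slack becomes true $(\frac{1}{2}+\epsilon')$-EF1 and the incentive-ratio denominator $u_i(A_i)$ stays bounded below; (ii) handling normalization, noting that the finiteness of $\alpha$ already forces truthful bundles to be nonempty whenever a profitable deviation exists, since an empty truthful bundle has value $0$ under $\calM$ and would itself give an infinite ratio; and (iii) verifying that the instances underlying \Cref{thm:lb-12-ef1-addi} can be realized within this restricted domain. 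I expect step (i) --- simultaneously controlling fairness loss and the incentive-ratio denominator under the exponential encoding --- to be the main obstacle, after which taking $c$ sufficiently large in terms of $\alpha$ and $\epsilon$ closes the contradiction.
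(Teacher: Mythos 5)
Your reduction is essentially the paper's: encode each additive valuation $u_i$ as the cancelable valuation $S \mapsto \exp(c\, u_i(S))$, feed the encoded profile to the hypothetical $\epsilon$-EF1 mechanism with finite incentive ratio $\alpha$, take logarithms of its ratio-type guarantees to get additive slacks $\ln(1/\epsilon)/c$ and $\ln\alpha/c$, and contradict \Cref{thm:lb-12-ef1-addi}. (The paper's encoding $v_i^{\delta}(S) = \exp(\delta\, v_i(S))$ with $v_i^{\delta}(\emptyset) = 0$ is moreover multiplicative, which is why it gets the stronger ``even for multiplicative valuations'' statement; your $\exp(c\,u_i(S)) - 1$ variant is cancelable but not multiplicative, which still suffices for the theorem as stated.)

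The one genuine gap is precisely the point you flag as your main obstacle, and your proposed fix does not work as described. Restricting the additive domain to valuations that are positive and bounded away from zero would take you outside the scope of \Cref{thm:lb-12-ef1-addi}: the hard instances in its proof assign value $0$ to several goods (e.g., $v_1'(g_3) = v_1'(g_4) = 0$), so a mechanism defined only on a bounded-below domain is not covered by that theorem without re-proving it there. And with a single global constant $c$, the argument fails on profiles containing positive bundle values far below $\ln\alpha / c$. The paper closes this by letting the scaling parameter $\delta$ be a function of the \emph{input profile} (together with $\epsilon$ and $\alpha$), chosen so that $\max\{\ln(1/\epsilon), \ln\alpha\} / (\delta \cdot v_i(S)) \leq 1/10$ for every $i$ and every $S$ with $v_i(S) > 0$ (condition \eqref{eqn:delta-con}); the zero-value cases are then handled separately --- approximate EF1 is vacuous when $v_i(A_j - g) = 0$, and a truthful utility of $0$ forces the manipulated utility to be $0$ as well. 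This yields a $\frac{9}{10}$-EF1 additive mechanism with incentive ratio at most $1.1$ on the \emph{unrestricted} additive domain, which is exactly what \Cref{thm:lb-12-ef1-addi} forbids. If you replace your fixed $c$ and domain restriction with this profile-dependent choice of the scaling parameter, your argument matches the paper's.
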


The rest of this section is devoted to proving \Cref{thm:lb-cancel-ef1}.
In particular, we establish \Cref{thm:lb-cancel-ef1} by showing a stronger statement that every $\epsilon$-EF1 mechanism for \emph{multiplicative} valuations, which constitute a subset of cancelable valuations \cite{DBLP:conf/aaai/BergerCFF22}, admits an infinite incentive ratio.
Recall that a valuation $v$ is multiplicative if $v(S) = \prod_{g \in S} v(g)$ for every $S \subseteq G$ with $|S| \geq 1$.
Moreover, since we assume valuations to be normalized and monotone, a multiplicative valuation $v$ should also satisfy $v(\emptyset) = 0$ and $v(g) \geq 1$ for every $g \in G$.

Assume for contradiction that there exists an $\epsilon$-EF1 mechanism $\calM^c$ for multiplicative valuations with an incentive ratio of $\alpha < \infty$, where $\alpha$ can arbitrarily depend on $n$ and $m$.
We construct a mechanism $\calM$ for additive valuations as follows, which we will show that satisfies $\frac{9}{10}$-EF1 with an incentive ratio of at most $1.1$, violating \Cref{thm:lb-12-ef1-addi}.
Given $\delta > 0$ and an additive valuation $v$, define $v^{\delta}$ as the multiplicative valuation satisfying $v^{\delta}(\emptyset) = 0$ and $v^{\delta}(S) = \exp(\delta \cdot v(S))$ for every $S \subseteq G$ with $|S| \geq 1$, and it is easy to verify that $v^{\delta}$ is normalized and monotone.
Given as input an additive valuation profile $(v_1, \ldots, v_n)$, let $\delta > 0$ be a sufficiently large real number such that for all $i \in N$ and $S \subseteq G$ with $v_i(S) > 0$,
\begin{align}\label{eqn:delta-con}
    \frac{\max\{ \ln(1/\epsilon), \ln \alpha \}}{\delta \cdot v_i(S)} \leq \frac{1}{10},
\end{align}
and $\calM$ outputs the allocation $\calM^c(v_1^{\delta}, \ldots, v_n^{\delta})$.
Note that $\delta$ is a function of $(v_1, \ldots, v_n)$, $\epsilon$, and $\alpha$.

We first show that $\calM$ satisfies $\frac{9}{10}$-EF1 for additive valuations.

\begin{lemma}\label{lem:addi-fair}
    $\calM$ satisfies $\frac{9}{10}$-EF1 for additive valuations.
\end{lemma}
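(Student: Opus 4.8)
The plan is to show that the $\epsilon$-EF1 guarantee of $\calM^c$ on the transformed multiplicative profile $(v_1^{\delta}, \ldots, v_n^{\delta})$ becomes, after taking logarithms, the desired $\frac{9}{10}$-EF1 guarantee of $\calM$ on the original additive profile. Write $A = \calM^c(v_1^{\delta}, \ldots, v_n^{\delta})$ for the output allocation, which equals $\calM(v_1, \ldots, v_n)$ by construction. I fix an arbitrary ordered pair of agents $i, j$ and verify the envy condition of $i$ towards $j$. If $A_j = \emptyset$ there is nothing to prove, so by the $\epsilon$-EF1 property of $\calM^c$ I may fix a good $g \in A_j$ with $v_i^{\delta}(A_i) \geq \epsilon \cdot v_i^{\delta}(A_j - g)$, and the goal becomes $v_i(A_i) \geq \frac{9}{10} \cdot v_i(A_j - g)$ for this same $g$.

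The key step is the logarithmic conversion. If $v_i(A_j - g) = 0$ the target inequality is immediate, so I assume $v_i(A_j - g) > 0$; this forces $A_j - g \neq \emptyset$, and since then $v_i^{\delta}(A_j - g) = \exp(\delta \cdot v_i(A_j - g)) > 0$, the chain $v_i^{\delta}(A_i) \geq \epsilon \cdot v_i^{\delta}(A_j - g) > 0$ forces $A_i \neq \emptyset$ as well. Hence both bundles are nonempty and $v_i^{\delta}$ takes its exponential form on each. Substituting $v_i^{\delta}(A_i) = \exp(\delta \cdot v_i(A_i))$ and $v_i^{\delta}(A_j - g) = \exp(\delta \cdot v_i(A_j - g))$ and taking logarithms turns the multiplicative slack $\epsilon$ into an additive shift:
\[
    v_i(A_i) \geq v_i(A_j - g) - \frac{\ln(1/\epsilon)}{\delta}.
\]

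It then remains to absorb the shift. Applying \eqref{eqn:delta-con} with $S = A_j - g$, which is legitimate precisely because $v_i(A_j - g) > 0$, bounds $\frac{\ln(1/\epsilon)}{\delta} \leq \frac{1}{10} \cdot v_i(A_j - g)$, and plugging this into the displayed inequality yields $v_i(A_i) \geq \frac{9}{10} \cdot v_i(A_j - g)$, as required. The only subtlety I anticipate is the bookkeeping around empty bundles and zero-valued sets: I must confirm that the residual case $A_i = \emptyset$ (where $v_i^{\delta}(A_i) = 0$) is consistent, but here the $\epsilon$-EF1 inequality $0 \geq \epsilon \cdot v_i^{\delta}(A_j - g)$ forces $v_i^{\delta}(A_j - g) = 0$, hence $A_j - g = \emptyset$ and $v_i(A_j - g) = 0$, so the additive condition holds trivially. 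No single step is genuinely hard; the crux is simply the observation that the exponential map sends the additive envy gap to a multiplicative one, and that the choice of $\delta$ dictated by \eqref{eqn:delta-con} is exactly what caps the resulting distortion at the factor $\frac{1}{10}$.
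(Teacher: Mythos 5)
Your proof is correct and follows essentially the same route as the paper's: invoke the $\epsilon$-EF1 guarantee of $\calM^c$ on the transformed profile, take logarithms to convert the multiplicative slack $\epsilon$ into an additive shift $\ln(1/\epsilon)/\delta$, and absorb that shift via \eqref{eqn:delta-con} applied to $S = A_j - g$ under the assumption $v_i(A_j - g) > 0$. Your extra bookkeeping around the empty-bundle cases (where $v_i^{\delta}$ is $0$ rather than $\exp(0)$) is a minor refinement the paper leaves implicit, but it does not change the argument.
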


\begin{proof}
    Fix an additive valuation profile $(v_1, \ldots, v_n)$, and let $A = \calM(v_1, \ldots, v_n) = \calM^c(v_1^{\delta}, \ldots, v_n^{\delta})$.
    Fix $i, j \in N$, and we show that $A$ satisfies $\frac{9}{10}$-EF1 for the pair of agents $(i, j)$ with respect to $(v_1, \ldots, v_n)$, i.e., if $A_j \neq \emptyset$, then there exists $g \in A_j$ such that $v_i(A_i) \geq \frac{9}{10} \cdot v_i(A_j - g)$.
    By the $\epsilon$-EF1 property of $\calM^c$, if $A_j \neq \emptyset$, then there exists $g \in A_j$ such that $v_i^{\delta}(A_i) \geq \epsilon \cdot v_i^{\delta}(A_j - g)$, which is equivalent to
    \begin{align}\label{eqn:ef1-multi}
        \exp(\delta \cdot v_i(A_i))
        \geq \epsilon \cdot \exp(\delta \cdot v_i(A_j - g)).
    \end{align}
    Assume that $v_i(A_j - g) > 0$, since otherwise, $\frac{9}{10}$-EF1 is straightforwardly satisfied for pair $(i, j)$ with respect to $(v_1, \ldots, v_n)$.
    Taking logarithm on both sides of \eqref{eqn:ef1-multi} and rearranging the terms, we obtain
    \begin{align*}
        v_i(A_i)
        \geq \frac{\ln \epsilon}{\delta} + v_i(A_j - g)
        \geq \frac{9}{10} \cdot v_i(A_j - g),
    \end{align*}
    where the second inequality holds by the assumption that $v_i(A_j - g) > 0$ and \eqref{eqn:delta-con}, implying that $A$ satisfies $\frac{9}{10}$-EF1 for pair $(i, j)$ with respect to $(v_1, \ldots, v_n)$.
    Therefore, $A$ satisfies $\frac{9}{10}$-EF1 with respect to $(v_1, \ldots, v_n)$, concluding that $\calM$ satisfies $\frac{9}{10}$-EF1 for additive valuations.
\end{proof}

Next, we show an incentive ratio upper bound of $1.1$ for $\calM$ with additive valuations.

\begin{lemma}\label{lem:addi-ir}
    $\calM$ admits an incentive ratio of at most $1.1$ for additive valuations.
\end{lemma}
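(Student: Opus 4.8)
The plan is to translate the multiplicative incentive ratio $\alpha$ of $\calM^c$ into an additive one for $\calM$ via the logarithm, mirroring the proof of \Cref{lem:addi-fair}. Fix an additive profile, let $v_i$ be agent $i$'s true valuation, $v_{-i}$ the others' (truthful) valuations, and $v_i'$ an arbitrary misreport. Write $A = \calM(v_{-i}, v_i)$ and $A' = \calM(v_{-i}, v_i')$ for the truthful and deviated allocations; the goal is to show $v_i(A_i') \leq 1.1 \cdot v_i(A_i)$.

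The first step is to realize a single-agent additive deviation as a single-agent multiplicative deviation so that the incentive ratio of $\calM^c$ becomes applicable. Since $\calM$ feeds the transformed profile $(v_1^\delta, \ldots, v_n^\delta)$ into $\calM^c$, a deviation of agent $i$ from $v_i$ to $v_i'$ induces, for a \emph{fixed} $\delta$, exactly the single-coordinate multiplicative deviation from $v_i^\delta$ to $(v_i')^\delta$ with the reports $v_{-i}^\delta$ of the other agents unchanged. Invoking the incentive ratio $\alpha$ of $\calM^c$ then yields $v_i^\delta(A_i') \leq \alpha \cdot v_i^\delta(A_i)$, i.e.\ $\exp(\delta \cdot v_i(A_i')) \leq \alpha \cdot \exp(\delta \cdot v_i(A_i))$.

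The second step converts this back. Taking logarithms gives $v_i(A_i') \leq v_i(A_i) + \frac{\ln \alpha}{\delta}$, so the multiplicative slack $\alpha$ has become an additive slack $\frac{\ln \alpha}{\delta}$. When $v_i(A_i) > 0$, condition \eqref{eqn:delta-con} instantiated at $S = A_i$ gives $\frac{\ln \alpha}{\delta} \leq \frac{1}{10} \cdot v_i(A_i)$, whence $v_i(A_i') \leq \frac{11}{10} \cdot v_i(A_i)$, as desired. The degenerate case $v_i(A_i) = 0$ needs separate care: here $v_i^\delta(A_i) \leq 1$, so the displayed inequality forces $v_i(A_i') \leq \frac{\ln \alpha}{\delta}$; assuming $v_i(A_i') > 0$ and feeding this back into \eqref{eqn:delta-con} at $S = A_i'$ (using $\max\{\ln(1/\epsilon), \ln \alpha\} \geq \ln \alpha$) yields $v_i(A_i') \leq \frac{1}{10} \cdot v_i(A_i')$, forcing $v_i(A_i') = 0$. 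Thus the ratio is at most $1.1$ in every case.

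The step I expect to be the main obstacle — and the one deserving the most care — is the reduction in the first step, because $\calM$ selects $\delta$ as a function of its input profile, so the truthful run uses some $\delta_1 = \delta(v_{-i}, v_i)$ while the deviated run uses $\delta_2 = \delta(v_{-i}, v_i')$. For the single-coordinate multiplicative deviation to be legitimate, $\delta$ must be held fixed across the two profiles; the plan is to argue that the comparison may be carried out at a common, sufficiently large $\delta$ — any $\delta$ satisfying \eqref{eqn:delta-con} for both $(v_{-i}, v_i)$ and $(v_{-i}, v_i')$ will do — exploiting that \eqref{eqn:delta-con} is merely a ``$\delta$ large enough'' requirement and that the resulting bound only tightens as $\delta$ grows. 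I must also ensure $\delta$ is large relative to the \emph{true} valuation $v_i$ rather than the reported one, which is precisely what legitimizes the applications of \eqref{eqn:delta-con} at $S = A_i$ and $S = A_i'$; this is automatic on the truthful profile. Once a common admissible $\delta$ is fixed, the two steps above close the argument.
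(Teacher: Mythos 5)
Your proposal is correct and follows essentially the same route as the paper's own proof: invoke the incentive ratio $\alpha$ of $\calM^c$ on the transformed profiles, take logarithms so that the multiplicative slack becomes an additive slack of $\ln\alpha/\delta$, absorb that slack via \eqref{eqn:delta-con}, and dispose of the zero-utility case separately. The $\delta$-dependence subtlety you single out as the main obstacle is genuine but is also silently elided by the paper, whose proof simply writes the same $\delta$ for the truthful and manipulated runs even though $\delta$ is defined as a function of the reported profile; your suggested repair (comparing at a common admissible $\delta$) is no less rigorous than what the source actually does, though note it does not by itself resolve the issue, since $\calM$'s two outputs are computed at the two profile-specific values of $\delta$ rather than at any common one.
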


\begin{proof}
    Due to symmetry, it is sufficient to show that agent $1$ cannot increase his utility under $\calM$ by a factor of strictly larger than $1.1$ via misreporting.
    Fix an additive valuation profile $(v_1, \ldots, v_n)$, and suppose that agent $1$ manipulates his valuation as $\hat{v}_1$.
    By the definition of $\calM$, the utilities of agent $1$ with and without manipulation are $v_1(\calM_1(\hat{v}_1, v_2, \ldots, v_n)) = v_1(\calM_1^c(\hat{v}_1^{\delta}, v_2^{\delta}, \ldots, v_n^{\delta}))$ and $v_1(\calM_1(v_1, \ldots, v_n)) = v_1(\calM_1^c(v_1^{\delta}, \ldots, v_n^{\delta}))$, respectively.
    By the incentive ratio $\alpha$ of $\calM^c$,
    \begin{align}\label{eqn:ic-multi}
        &v_1^{\delta}(\calM^c_1(\hat{v}_1^{\delta}, v_2^{\delta}, \ldots, v_n^{\delta}))
        \leq \alpha \cdot v_1^{\delta}(\calM_1^c(v_1^{\delta}, \ldots, v_n^{\delta})).
    \end{align}
    Taking logarithm on both sides and by the definition of $v_1^{\delta}$, \eqref{eqn:ic-multi} is equivalent to
    \begin{align}\label{eqn:ic-multi-modi}
        \delta \cdot v_1(\calM_1^c(\hat{v}_1^{\delta}, v_2^{\delta}, \ldots, v_n^{\delta}))
        \leq \ln \alpha + \delta \cdot v_1(\calM^c_1(v_1^{\delta}, \ldots, v_n^{\delta})).
    \end{align}
    If $v_1(\calM^c_1(v_1^{\delta}, \ldots, v_n^{\delta})) = 0$, then $v_1(\calM_1^c(\hat{v}_1^{\delta}, v_2^{\delta}, \ldots, v_n^{\delta})) = 0$ by \eqref{eqn:ic-multi-modi} and \eqref{eqn:delta-con}, which implies that agent $1$ cannot increase his utility under $\calM$ by misreporting $\hat{v}_1$.
    From now on, we assume that $v_1(\calM_1^c(v_1^{\delta}, \ldots, v_n^{\delta})) > 0$.

    Note that by dividing $\delta \cdot v_1(\calM_1^c(v_1^{\delta}, \ldots, v_n^{\delta})) > 0$ on both sides of \eqref{eqn:ic-multi-modi} and rearranging the terms, we obtain
    \begin{align}\label{eqn:misrep-mc}
        \frac{v_1(\calM_1^c(\hat{v}_1^{\delta}, v_2^{\delta}, \ldots, v_n^{\delta}))}{v_1(\calM_1^c(v_1^{\delta}, v_2^{\delta}, \ldots, v_n^{\delta}))}
        \leq \frac{\ln \alpha}{\delta \cdot v_1(\calM_1^c(v_1^{\delta}, \ldots, v_n^{\delta}))} + 1
        \leq 1.1,
    \end{align}
    where the last inequality holds by the assumption that $v_1(\calM_1^c(v_1^{\delta}, \ldots, v_n^{\delta})) > 0$ and \eqref{eqn:delta-con}.
    Hence, by misreporting $\hat{v}_1$, agent $1$ can increase his utility under $\calM$ by a factor of
    \begin{align*}
        \frac{v_1(\calM_1(\hat{v}_1, v_2, \ldots, v_n))}{v_1(\calM_1(v_1, v_2, \ldots, v_n))}
        = \frac{v_1(\calM^c_1(\hat{v}_1^{\delta}, v_2^{\delta}, \ldots, v_n^{\delta}))}{v_1(\calM^c_1(v_1^{\delta}, v_2^{\delta}, \ldots, v_n^{\delta}))}
        \leq 1.1,
    \end{align*}
    where the equality holds by the definition of $\calM$ and the inequality holds by \eqref{eqn:misrep-mc}, concluding that the incentive ratio of $\calM$ is upper bounded by $1.1$ for additive valuations.
\end{proof}

Finally, combining \Cref{lem:addi-fair} and \Cref{lem:addi-ir}, it follows that $\calM$ satisfies $\frac{9}{10}$-EF1 with an incentive ratio of at most $1.1$ for additive valuations, which contradicts \Cref{thm:lb-12-ef1-addi}, concluding the proof of \Cref{thm:lb-cancel-ef1}.

\section{Subadditive Cancelable Valuations}

We have shown in \Cref{thm:lb-cancel-ef1} that every EF1 mechanism for cancelable valuations admits an infinite incentive ratio.
In this section, we show that this impossibility result can be bypassed with the additional property of subadditivity.
In particular, for subadditive cancelable valuations, we show that Round-Robin, which satisfies EF1, admits an incentive ratio of $2$.
Then, we complement our positive result by providing an incentive ratio lower bound of $\varphi = (1 + \sqrt{5}) / 2 \approx 1.618$ for $(\varphi - 1)$-EF1 mechanisms with subadditive cancelable valuations, improving the lower bound of $1.5$ implied by \Cref{thm:lb-12-ef1-addi}.

\subsection{Upper Bound}

We first present our positive result.
Recall that Round-Robin, which is formally presented in Mechanism~\ref{alg:round-robin}, consists of multiple \emph{rounds}, and at each round, agents alternately receive an available good with the highest value.
When multiple goods have the same value, we assume agents always break ties lexicographically, i.e., breaking ties in favor of the choice with the smallest index.
We call the process that an agent receives a good at a \emph{stage}, and there are $m$ stages in total.

\begin{algorithm}[!htp]
        \caption{Round-Robin}
        \label{alg:round-robin}
        \begin{algorithmic}[1]
            \State {$S = G$; $(A_1, \ldots, A_n) = (\emptyset, \ldots, \emptyset)$; $k = \lceil m / n \rceil$}
            \For {$r = 1, \ldots, k$}
                \For {$i = 1, \ldots, n$}
                    \State {$g = \arg\max_{h \in S} v_i(h)$} \Comment{Break ties lexicographically.}
                    \State {$A_i = A_i \cup \{g\}$} \Comment{The current agent receives his favorite available good.}
                    \State {$S = S \setminus \{g\}$} \Comment{The good is no longer available.}
                \EndFor
            \EndFor
            \State {\Return $A = (A_1, \ldots, A_n)$}
        \end{algorithmic}
\end{algorithm}

For cancelable valuations, \cite{DBLP:conf/sigecom/AmanatidisBL0R23} shows that Mechanism~\ref{alg:round-robin} satisfies EF1, and hence, Mechanism~\ref{alg:round-robin} admits an infinite incentive ratio by \Cref{thm:lb-cancel-ef1}.
Nevertheless, we show that the incentive ratio of Mechanism~\ref{alg:round-robin} can be improved to $2$ with the additional property of subadditivity.

\begin{theorem}\label{thm:incentive-ratio-of-round-robin-for-subadditive-cancelable-valuations}
    Mechanism~\ref{alg:round-robin} admits an incentive ratio of $2$ for subadditive cancelable valuations.
\end{theorem}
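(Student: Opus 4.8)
The plan is to fix an arbitrary agent $i$ with true valuation $v = v_i$, compare the bundle $A = (a_1, \ldots, a_k)$ that agent $i$ receives under truthful reporting with the bundle $B = (b_1, \ldots, b_k)$ it receives under an arbitrary misreport, and show $v(B) \le 2 v(A)$. Here $a_r$ (resp.\ $b_r$) is agent $i$'s pick in round $r$, and crucially $|A| = |B| = k$, since the number of goods an agent receives in Mechanism~\ref{alg:round-robin} depends only on $i, n, m$ and not on the reported valuations. Because agent $i$ always takes a favorite available good and the pool of available goods only shrinks across rounds, the truthful picks have non-increasing value: $v(a_1) \ge v(a_2) \ge \cdots \ge v(a_k)$.

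The backbone is the value-domination property of cancelable valuations \cite{DBLP:conf/sigecom/AmanatidisBL0R23}: if $S = \{s_1, \ldots, s_t\}$ and $T = \{t_1, \ldots, t_t\}$ satisfy $v(s_\ell) \ge v(t_\ell)$ for all $\ell \in [t]$, then $v(S) \ge v(T)$ (this follows from the contrapositive of cancelability, replacing one element at a time). I would combine it with subadditivity as follows. Let $b_{(1)} \ge b_{(2)} \ge \cdots \ge b_{(k)}$ be the manipulated picks sorted by true value, and suppose I can establish the central inequality
\[
v(b_{(\ell)}) \le v(a_{\lceil \ell / 2 \rceil}) \qquad \text{for every } \ell \in [k].
\]
Then, splitting $B$ by parity of value-rank into $B_{\mathrm{odd}} = \{b_{(1)}, b_{(3)}, \ldots\}$ and $B_{\mathrm{even}} = \{b_{(2)}, b_{(4)}, \ldots\}$, the $j$-th largest element of each half is value-dominated by $a_j$; the domination property gives $v(B_{\mathrm{odd}}) \le v(A)$ and $v(B_{\mathrm{even}}) \le v(A)$, and subadditivity yields $v(B) \le v(B_{\mathrm{odd}}) + v(B_{\mathrm{even}}) \le 2 v(A)$. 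Both hypotheses are essential: cancelability lets me \emph{lower}-bound $v(A)$ by the value of a dominated sub-bundle (which subadditivity alone cannot do), while subadditivity lets me break $B$ into two dominated pieces.

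The central inequality is equivalent to the following threshold-doubling statement, which I expect to be the main obstacle: for every $\theta \ge 0$, the number of goods agent $i$ secures under the misreport with true value exceeding $\theta$ is at most twice the number it secures when reporting truthfully. I would prove this by tracking how the high goods $H_\theta = \{g : v(g) > \theta\}$ are consumed in the two executions. In the truthful run, since agent $i$ always grabs a favorite available good, all of $H_\theta$ is exhausted by the end of agent $i$'s turn in round $f := f_T(\theta)$, i.e.\ within the first $fn + i - 1$ stages. Writing $f_M(\theta)$ for the count under the misreport, it then suffices to show agent $i$ secures at most $f$ goods of $H_\theta$ in rounds after $f$. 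The key point is that the opponents act greedily on their own fixed valuations in \emph{both} runs, so agent $i$ can only make a good of $H_\theta$ survive longer than in the truthful run by \emph{preempting} it — taking it at one of agent $i$'s own stages before an opponent would. As agent $i$ has one stage per round and $H_\theta$ is confined to roughly the first $f$ rounds' worth of stages, the number of such preemptions should be at most $f$; I would make this precise by exhibiting an injection from agent $i$'s post-round-$f$ acquisitions of $H_\theta$-goods into its $f$ truthful high picks $\{a_1, \ldots, a_f\}$, charging each delayed or preempted good to a distinct truthful high pick.

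The hard part is exactly this injection: across two executions that diverge after agent $i$'s first deviating stage, one must control how greedy opponents redistribute the high goods and argue that each extra high good captured can be charged to a distinct high good already captured truthfully. The subtlety is that a single deviation can \emph{stretch} the window during which high goods remain available (letting one high good survive an extra round), so the argument must show that each unit of window-stretching consumes one of agent $i$'s truthful high picks; the factor of exactly $2$, and its independence of $n$, should emerge precisely from this one-to-one charging between extra captures and truthful captures. Once the threshold-doubling statement is secured, the reduction above closes the proof.
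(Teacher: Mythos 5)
Your reduction is sound and is essentially the paper's: the element-wise two-to-one domination of the manipulated bundle by the truthful bundle, combined with the cancelable domination lemma applied separately to two halves and then subadditivity, is exactly how the paper closes the argument. (There the domination is packaged as a mapping $M_m$ from the manipulated bundle into the truthful bundle with $v(g) \le v(M_m(g))$ and every target having at most two preimages, which is equivalent to your central inequality $v(b_{(\ell)}) \le v(a_{\lceil \ell/2\rceil})$ and to your threshold-doubling statement.) The problem is that the entire combinatorial content of the theorem lives in that central inequality, and you do not prove it: you explicitly defer ``the hard part'' --- the injection from agent $i$'s late acquisitions of goods in $H_\theta$ into the $f$ truthful high picks --- to a charging argument described only in outline (``the number of such preemptions should be at most $f$''). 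As written, the sketch does not control the interaction between the two executions: a single early deviation cascades, changing which goods the opponents take at every subsequent stage, so the observation that opponents act greedily on fixed valuations in both runs does not by itself bound how many goods of $H_\theta$ survive into late rounds, nor does one obviously get a \emph{distinct} truthful high pick to charge for each extra capture.

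The paper's Lemma~\ref{lem:exist-mapping} supplies precisely this missing piece via a stage-by-stage induction: it maintains, for every stage $k$, a valid mapping whose domain is $X_k = B_k' \cup (G_k' \setminus G_k)$ --- crucially including not only the goods agent $1$ has already secured in the manipulated run but also the goods \emph{delayed} relative to the truthful run --- and the case analysis shows that each stage adds at most one new element to $X_k$, which can be charged either to agent $1$'s fresh truthful pick (on agent $1$'s turn) or to the image of the good that simultaneously leaves $X_k$ (on another agent's turn, where one shows $g_k' \in G_{k-1}' \setminus G_{k-1}$ and reuses $M_{k-1}(g_k')$). Some bookkeeping of exactly this kind, tracking delayed goods and not just acquired ones, is what your injection needs; without it the proof is incomplete. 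So this is the same approach as the paper's, with its key lemma asserted rather than established.
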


The rest of this subsection is devoted to proving \Cref{thm:incentive-ratio-of-round-robin-for-subadditive-cancelable-valuations}.
Note that the lower bound in \Cref{thm:incentive-ratio-of-round-robin-for-subadditive-cancelable-valuations} is implied by the lower bound for the incentive ratio of Mechanism~\ref{alg:round-robin} for additive valuations \cite{DBLP:conf/aaai/XiaoL20}, and it remains to prove the upper bound.

We first present a crucial property of cancelable valuations.

\begin{lemma}[\cite{DBLP:conf/sigecom/AmanatidisBL0R23}]\label{lem:a-property-of-cancelable-valuations}
    Suppose that $v(\cdot)$ is cancelable.
    Let $X = \{x_1, \ldots, x_k\} \subseteq G$ and $Y = \{y_1, \ldots, y_k\} \subseteq G$.
    If $v(x_j) \geq v(y_j)$ for every $j \in [k]$, then $v(X) \geq v(Y)$.
\end{lemma}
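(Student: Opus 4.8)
The plan is to reduce the statement to a chain of single-good ``swaps,'' each justified by the defining implication of cancelability read in contrapositive form. First I would record the basic tool: since $v$ is cancelable, $v(S + g) > v(T + g)$ implies $v(S) > v(T)$, so equivalently $v(S) \le v(T)$ implies $v(S + g) \le v(T + g)$ whenever $g \notin S \cup T$. Iterating this one element at a time yields the form I will actually use: if $v(S) \le v(T)$ and $C$ is a set disjoint from $S \cup T$, then $v(S \cup C) \le v(T \cup C)$, since at each step the freshly added element of $C$ lies outside both current sets and the single-good version applies.

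With this helper, I would prove the lemma by induction on $k$, splitting on whether $X$ and $Y$ overlap. When $X \cap Y = \emptyset$, I use a hybrid argument: set $Z_j = \{x_1, \ldots, x_j\} \cup \{y_{j+1}, \ldots, y_k\}$ so that $Z_0 = Y$ and $Z_k = X$, and show $v(Z_{j-1}) \le v(Z_j)$ for each $j$. Writing $C = \{x_1, \ldots, x_{j-1}\} \cup \{y_{j+1}, \ldots, y_k\}$, we have $Z_{j-1} = C + y_j$ and $Z_j = C + x_j$; disjointness of $X$ and $Y$ (together with $X, Y$ being sets) guarantees $x_j, y_j \notin C$, so from $v(y_j) \le v(x_j)$ the iterated helper gives $v(C + y_j) \le v(C + x_j)$. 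Telescoping over $j$ yields $v(Y) \le v(X)$.

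The remaining, and genuinely delicate, case is when $X$ and $Y$ share a good, since then the naive hybrid can move in the wrong direction: replacing $y_j$ by an $x_j$ that already sits in the common part merely deletes $y_j$. Here I would first massage the matching to expose a fixed point. Pick $g \in X \cap Y$ with $g = x_a = y_b$. If $a = b$ we already have a matched pair $x_a = y_a$; otherwise I swap the two targets, setting $y_a' = y_b$ and $y_b' = y_a$ and leaving the rest unchanged. This preserves the domination hypothesis, since index $a$ becomes an equality and index $b$ satisfies $v(x_b) \ge v(y_b) = v(g) = v(x_a) \ge v(y_a) = v(y_b')$. After relabeling there is thus an index with $x_a = y_a' = g$; removing $g$ from both sets yields a size-$(k-1)$ instance meeting the hypotheses, to which the induction hypothesis applies to give $v(Y - g) \le v(X - g)$. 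Since $g \notin (X - g) \cup (Y - g)$, one more application of the single-good helper restores $g$ and yields $v(Y) \le v(X)$.

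The main obstacle is precisely this overlap handling: the cancelability axiom only permits cancelling a good that is \emph{absent from both sides}, so shared goods must be stripped before any swap, and the pairing witnessing $v(x_j) \ge v(y_j)$ must be repaired after each removal. The rematching step above is what keeps this clean; everything else is routine bookkeeping with the iterated contrapositive of the definition.
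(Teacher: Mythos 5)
The paper does not prove this lemma at all --- it is imported verbatim from the cited reference \cite{DBLP:conf/sigecom/AmanatidisBL0R23} --- so there is no in-paper argument to compare against. Judged on its own, your proof is correct and complete. The iterated contrapositive (from $v(S)\le v(T)$ and $C\cap(S\cup T)=\emptyset$ deduce $v(S\cup C)\le v(T\cup C)$) is sound because each freshly added element of $C$ lies outside both current sets; the hybrid chain $Z_0=Y,\ldots,Z_k=X$ in the disjoint case correctly writes $Z_{j-1}=C+y_j$ and $Z_j=C+x_j$ with $x_j,y_j\notin C$; and your handling of the overlap case is the one genuinely delicate point, which you resolve properly: the rematching $y_a'\leftrightarrow y_b'$ preserves the good-wise domination (index $a$ becomes an equality, and index $b$ follows from $v(x_b)\ge v(y_b)=v(x_a)\ge v(y_a)$), after which the shared good can be stripped, the induction hypothesis applied, and the good re-inserted via one more application of the contrapositive. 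The only cosmetic omission is the trivial base case $k=0$ (or $k=1$). This gives the paper a self-contained justification for a statement it currently only cites.
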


Since every agent $i$ cannot alter the goods chosen in the first $i - 1$ stages by manipulation, it is sufficient to prove the incentive ratio for agent $1$.
Assuming all agents report truthfully, we renumber the goods so that for every $i \in [m]$, $g_i$ is the good received by some agent in stage $i$, i.e., $g_i$ is the favorite good among all the remaining goods for the agent who is designated to receive a good in stage $i$.
For $i \in \{0, \ldots, m\}$, define $G_i=\{g_{i + 1}, \ldots, g_m\}$ as the set of remaining goods at the end of stage $i$ and $B_i$ as the set of goods received by agent $1$ until the end of stage $i$.

Now, assume that agent $1$ manipulates his valuation, and we run Mechanism~\ref{alg:round-robin} again on the manipulated valuation profile.
For every $i \in \{0, \ldots, m\}$, let $g_i'$ be the good received by some agent in stage $i$, and define $G_i' = \{g_{i + 1}', \ldots, g_m'\}$ and $B_i'$ analogously.
A crucial observation, which will be formalized later, is that $G_i' \setminus G_i$ includes all the goods in $\{g_1, \ldots, g_i\}$ ``left'' to the subsequent stages by agent $1$ via manipulation, and in the extreme case, they will all end up being received by agent $1$.

For every $i \in \{0, \ldots, m\}$, let $X_i = B_i' \cup (G_i' \setminus G_i)$ be the set of goods possibly obtained by agent $1$ among goods in $\{g_1, \ldots, g_i\}$.
Our goal is to establish good-wise comparisons between goods in $X_i$ and $B_i$.
Specifically, we hope to assign each good in $X_i$ to a good in $B_i$ such that the value of the former with respect to agent $1$'s true valuation is upper bounded by that of the latter, and each good in $B_i$ is assigned with at most two goods in $X_i$.
In particular, we show that there exist mappings $M_i: X_i \to B_i$ for every $i \in \{0, \ldots, m\}$ satisfying the following properties:
\begin{enumerate}
    \item \label{item:property1-for-item-mappings} for every $g \in X_i$, $v_1(g) \leq v_1(M_i(g))$, and
    \item \label{item:property2-for-item-mappings} for every $g \in B_i$, $|M_i^{-1}(g)| \leq 2$.
\end{enumerate}

We demonstrate the implication of the existence of such mappings.
With the mapping $M_m: X_m \to B_m$ satisfying both properties, by Property~\ref{item:property2-for-item-mappings}, there exists a partition $(R_1, R_2)$ of $X_m = B'_m$ such that for every $g \in B_m$, $|M_m^{-1}(g) \cap R_1| \leq 1$ and $|M_m^{-1}(g) \cap R_2| \leq 1$.
Furthermore, by Property~\ref{item:property1-for-item-mappings}, we apply \Cref{lem:a-property-of-cancelable-valuations} twice with respectively $X = R_1$, $Y = M_m(R_1)$ and $X = R_2$, $Y = M_m(R_2)$ to obtain
\begin{align*}
    v_1(R_1)
    \leq v_1(M_m(R_1))
    \leq v_1(B_m)
\end{align*}
where the second inequality holds by the monotonicity of $v_1$, and similarly,
\begin{align*}
    v_1(R_2)
    \leq v_1(M_m(R_2))
    \leq v_1(B_m).
\end{align*}
As a result, by the subadditivity of $v_1$,
\begin{align*}
    v_1(B_m')
    \leq v_1(R_1) + v_1(R_2)
    \leq 2v_1(B_m).
\end{align*}
This indicates that agent $1$ cannot gain a utility of more than $2v_1(B_m)$ by manipulation, where $v_1(B_m)$ equals his utility when reporting truthfully, concluding the proof of \Cref{thm:incentive-ratio-of-round-robin-for-subadditive-cancelable-valuations}.

It remains to prove the existence of such mappings, which is provided in the following lemma.

\begin{lemma}\label{lem:exist-mapping}
    There exist mappings $M_i: X_i \to B_i$ for every $i \in \{0, \ldots, m\}$ satisfying Properties~\ref{item:property1-for-item-mappings} and~\ref{item:property2-for-item-mappings}.
\end{lemma}

\begin{proof}
    We say that $M_i$ is \emph{valid} if it satisfies Properties~\ref{item:property1-for-item-mappings} and \ref{item:property2-for-item-mappings}.
    For $i \in [m]$, let $b_i \in B_i$ be the good received by agent $1$ the latest among all goods in $B_i$ when he reports truthfully.
    Observe that $v_1(b_i) \geq \max_{g \in G_i} v_1(g)$ by the description of Mechanism~\ref{alg:round-robin}.
    For $i = 0$, $X_0 = \emptyset$ since $B_0 = B'_0 = \emptyset$ and $G_0 = G_0' = G$, and hence, a valid mapping $M_0$ straightforwardly exists.
    Assume for induction that $M_{k - 1}$ is a valid mapping with $k \in [m]$, and we show how to construct a valid mapping $M_k$ based on $M_{k - 1}$.
    For convenience, we call the goods in $X_k \setminus X_{k - 1}$ as new goods and all other goods in $X_k$ as old goods.
    We define $M_k(g') = M_{k - 1}(g')$ for each old good $g'$, and it remains to specify $M_k(g)$ for each new good $g$

    If $k = cn + 1$ for some $c \in \Z_{\geq 0}$, i.e., it is agent $1$'s turn to receive his favorite good, then we have $B_k = B_{k - 1} \cup \{g_k\}$, $B_k' = B_{k - 1}' \cup \{g_k'\}$, and $b_k = g_k$.
    Note that the only possible new goods are $g_k$ and $g_k'$.
    For each new good $g$, let $M_k(g) = g_k$.
    If $g_k = g_k'$, then it is easy to verify that $M_k$ satisfies both Properties~\ref{item:property1-for-item-mappings} and~\ref{item:property2-for-item-mappings}.
    Now, assume that $g_k \neq g_k'$, and we show that $M_k$ constructed above is a valid mapping.
    Firstly, Property~\ref{item:property2-for-item-mappings} is satisfied as only $g_k$ and $g_k'$ might be contained in $M_k^{-1}(b_k)$.
    Besides, if $g_k$ is a new good, then Property~\ref{item:property1-for-item-mappings} straightforwardly holds for $g_k$.
    Finally, if $g_k'$ is a new good, then $g_k' \notin X_{k - 1}$, which implies that $g_k' \in G_{k - 1}$, and by the definition of $g_k$, we have $v_1(g_k') \leq \max_{g \in G_{k - 1}} v_1(g) \leq v_1(g_k) = v_1(M_k(g_k'))$.
    Thus, Property~\ref{item:property1-for-item-mappings} also holds for $g_k'$.
    As a result, $M_k$ is a valid mapping.

    If $k = cn+j$ for some $c \in \Z_{\geq 0}$ and $j \in \{2, \ldots, n\}$, then we have $B_k = B_{k - 1}$ and $B_k' = B_{k - 1}'$.
    Note that the only possible new good is $g_k$.
    Assume that $g_k \neq g_k'$, as otherwise, we have $X_k = X_{k - 1}$ and we are done.
    If $g_k \notin G_k'$, then we are also done since $X_k \subseteq X_{k - 1}$.
    Now, assume that $g_k \in G_k' \subseteq G_{k - 1}'$.
    Since $g_k$ is agent $j$'s favorite good in $G_{k - 1}$ and agent $j$, who reports truthfully, receives $g_k' \neq g_k$ in stage $k$ when agent $1$ manipulates his valuation, we have $g_k' \notin G_{k - 1}$, which implies that $g_k' \in G_{k - 1}' \setminus G_{k - 1} \subseteq X_{k - 1}$.
    Thus, $G_k' \setminus G_k = ((G_{k - 1}' \setminus G_{k - 1}) \cup \{g_k\}) \setminus \{g_k'\}$, which implies that $g_k$ is a new good and $g_k' \notin X_k$.
    Let $M_k(g_k) = M_{k - 1}(g_k')$, and it is easy to verify that Property~\ref{item:property2-for-item-mappings} are satisfied for $M_k$ (Let $M_k(g_k) = M_{k - 1}(g_k')=h$. Then, $g_k'$ was a preimage of $h$ in $M_{k-1}$, and it is replaced by $g_k$ in $M_{k}$. Therefore, the number of preimages of $h$ is unchanged).
    To see that Property~\ref{item:property1-for-item-mappings} holds, note that
    \begin{align*}
        v_1(g_k)
        \leq v_1(b_k)
        = \min_{g \in B_k} v_1(g)
        \leq v_1(M_{k - 1}(g_k'))
        = v_1(M_k(g_k)),
    \end{align*}
    where the last inequality holds by $M_{k - 1}(g_k') \in B_k$.
    As a result, $M_k$ is a valid mapping.
\end{proof}

\subsection{Lower Bound}

In this subsection, we give our improved incentive ratio lower bound for $(\varphi - 1)$-EF1 mechanisms.

\begin{theorem}\label{thm:lb-ef1-subaddi-cancel}
    Let $\varphi = (1 + \sqrt{5}) / 2$.
    Every $(\varphi - 1)$-EF1 mechanism for subadditive cancelable valuations admits an incentive ratio of at least $\varphi$.
\end{theorem}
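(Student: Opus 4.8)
The plan is to follow the same strategy that worked for the additive lower bound in \Cref{thm:lb-12-ef1-addi}, namely invoking \Cref{lem:control-lemma} to fix ``ownership'' of goods and then walking through a short sequence of profiles where the interplay between the fairness constraint and the incentive-ratio constraint forces a contradiction. Concretely, I would assume for contradiction that some $(\varphi-1)$-EF1 mechanism $\calM$ has incentive ratio $\alpha$ with $1 \le \alpha < \varphi$, restrict to $n=2$ agents, and choose a good $g^*$ that is $\alpha$-controlled by (say) agent $1$, together with a second good $g^{**}$. The key numerical fact driving everything is the defining relation of the golden ratio: $\varphi^2 = \varphi + 1$, equivalently $\varphi - 1 = 1/\varphi$, so that the fairness parameter $\varphi - 1$ and the target incentive ratio $\varphi$ are reciprocals. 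This is exactly the algebraic tightness one wants to exploit: an $(\varphi-1)$-EF1 split of two symmetric goods should pin down the allocation just as the $\tfrac12$-EF1 split did for four goods in the additive proof, but now with ratios calibrated so that the forcing gaps are all strictly larger than $\varphi$ rather than $1.5$.

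The main difference from the additive case is that I am now allowed to use subadditive cancelable valuations rather than only additive ones, which gives more modeling freedom to make the manipulation gains larger. First I would set up a symmetric starting profile on the controlled goods so that $(\varphi-1)$-EF1 forces a balanced allocation; then I would perturb agent $1$'s reported valuation (scaling up the value of $g^*$ relative to its companion, playing the role of the $\delta$ and $1.5\delta$ weights in the additive construction) so that any deviation from the intended bundle would let agent $1$ improve by a factor approaching $\varphi/(\varphi-1) = \varphi^2/\varphi = \varphi \cdot \tfrac{\varphi}{\varphi}$, i.e.\ strictly exceeding $\alpha$. The natural ratios to aim for are $\tfrac{\varphi}{\varphi-1} = \varphi+1 \over \varphi$ type quantities; I would tune the good values (for instance assigning weights in the pattern $1,\ \varphi,\ \varphi^2$) so that the ``you must keep both goods'' forcing inequality reads $\tfrac{1+\varphi}{\varphi} > \alpha$ and the ``you must give up a good'' forcing from the $(\varphi-1)$-EF1 constraint reads symmetrically. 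The final profile should be the one where agent $1$'s control forces him to hold $g^*$ while $(\varphi-1)$-EF1 forces agent $2$ to receive the companion, and a cross-deviation by agent $2$ (reached by comparing to an earlier profile) yields a gain factor of at least $\varphi$, contradicting $\alpha < \varphi$.

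The hard part will be arranging the valuation weights so that three constraints hold simultaneously at the contradiction profile: the control constraint on $g^*$, the $(\varphi-1)$-EF1 constraint that forbids both goods going to one agent, and an incentive-ratio constraint from some accessible neighboring profile that produces a gain strictly above $\varphi$. In the additive proof this required a careful chain of five auxiliary profiles $\bdv^{(1)}$ through $\bdv^{(5)}$, each fixing one more piece of the allocation; here I expect a similar-length chain, and the delicate bookkeeping is to verify that every intermediate forcing ratio, such as $\tfrac{v_1'(\{g_1,g_2\})}{v_1'(g_2)}$, strictly exceeds $\varphi$ rather than merely $1.5$. I would verify these inequalities using $\varphi^2 = \varphi + 1$ to keep the algebra clean. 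A secondary subtlety is ensuring the constructed valuations are genuinely subadditive \emph{and} cancelable: additive valuations are already both, so if I can realize the entire construction with additive valuations on the support and zero elsewhere (as in the additive proof), subadditivity and cancelability come for free, and the only real work is improving the forcing ratios from $1.5$ to $\varphi$ by exploiting the tighter $(\varphi-1)$-EF1 fairness guarantee. I anticipate that the cleanest route is to reuse the additive construction's skeleton verbatim and simply recalibrate the numeric weights to the golden-ratio-tuned values, checking at each step that $(\varphi-1)$-EF1 (rather than $\tfrac12$-EF1) licenses the required allocation conclusions.
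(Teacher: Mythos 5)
Your overall strategy is the right one, and it matches the paper's: keep the six-profile skeleton of the additive proof, replace the weights $\delta, 1.5\delta$ by $\delta, \varphi\delta$ so that the ``must keep both goods'' forcing ratio becomes $\frac{1+\varphi}{\varphi} = \varphi > \alpha$, and use \Cref{lem:control-lemma} exactly as before. However, your concluding plan --- realize the entire construction with additive valuations so that subadditivity and cancelability ``come for free'' --- has a genuine gap, and it occurs at a specific step. In the profile $\bdv^{(3)}$ one must rule out $|\calM_2(\bdv^{(3)})| > 2$ by arguing that agent $2$, whose valuation in $\bdv^{(1)}$ is the symmetric valuation $v_2$ and who receives the two goods $\{g_3,g_4\}$ there, would otherwise deviate to $v_2'$ and gain a factor of $\frac{v_2(\calM_2(\bdv^{(3)}))}{v_2(\calM_2(\bdv^{(1)}))}$. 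If $v_2$ is the uniform additive valuation, this ratio is only $\frac{3}{2} < \varphi$, so it does \emph{not} contradict $\alpha < \varphi$; the whole chain breaks here, and no additive recalibration of $v_2$ can fix it while keeping the symmetry needed for $\bdv^{(0)}$ to force a $2$--$2$ split. (Indeed, if the argument went through with purely additive valuations it would prove a $\varphi$ lower bound for additive valuations, strengthening \Cref{thm:lb-12-ef1-addi} itself.)

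The paper's fix is to make exactly one valuation non-additive: $v_2$ is taken to depend only on cardinality, with $v_2(S) = 0, 1, \varphi+\epsilon, \varphi^2, \varphi^2+\epsilon$ for $|S| = 0,1,2,3,4$, where $\epsilon>0$ is small enough that $\frac{\varphi^2}{\varphi+\epsilon} > \alpha$. This valuation is easily checked to be subadditive and cancelable (it is symmetric and strictly increasing in $|S|$), it still forces the balanced split in $\bdv^{(0)}$ under $(\varphi-1)$-EF1, and it makes the three-versus-two-goods forcing ratio $\frac{\varphi^2}{\varphi+\epsilon} > \alpha$ rather than $\frac{3}{2}$. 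All other valuations in the chain remain additive, as you anticipated. Separately, be careful with the algebra in your sketch: $\frac{\varphi}{\varphi-1} = \varphi^2 = \varphi+1 \approx 2.618$, not $\varphi$; the ratio actually exploited is $\frac{1+\varphi}{\varphi} = \varphi$.
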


The proof of \Cref{thm:lb-ef1-subaddi-cancel} is a modification of the proof of \Cref{thm:lb-12-ef1-addi} and is deferred to Appendix~\ref{sec:proof-subaddi-cancel}.
Interestingly, we only need to replace one constructed additive valuation in the proof of \Cref{thm:lb-12-ef1-addi} with a non-additive valuation, and all other valuations remain the same.
\section{Submodular Valuations}

In this section, we consider submodular valuations and show that a generalization of Round-Robin, which satisfies $\frac{1}{2}$-EF1, admits an incentive ratio of $n$.

Given that Mechanism~\ref{alg:round-robin} is not known to possess any fairness property for submodular valuations, we consider a generalization of Round-Robin presented in Mechanism~\ref{alg:round-robin-submodular}.
In particular, instead of receiving an available good with the highest value, agents alternately receive an available good with the highest marginal value with respect to the current bundle.
\cite{DBLP:conf/sigecom/AmanatidisBL0R23} shows that Mechanism~\ref{alg:round-robin-submodular} satisfies $\frac{1}{2}$-EF1 for submodular valuations.
We show in the following theorem that Mechanism~\ref{alg:round-robin-submodular} admits an incentive ratio of $n$ for submodular valuations.

\begin{algorithm}[!htp]
        \caption{Round-Robin for Submodular Valuations}
        \label{alg:round-robin-submodular}
        \begin{algorithmic}[1]
            \State {$S = G$; $(A_1, \ldots, A_n) = (\emptyset, \ldots, \emptyset)$; $k = \lceil m / n \rceil$}
            \For {$r = 1, \ldots, k$}
                \For {$i = 1, \ldots, n$}
                    \State {$g = \arg\max_{h \in S} v_i(h \mid A_i)$} \Comment{Break ties lexicographically.}
                    \State {$A_i = A_i \cup \{g\}$} \Comment{The current agent receives his favorite available good.}
                    \State {$S = S \setminus \{g\}$} \Comment{The good is no longer available.}
                \EndFor
            \EndFor
            \State {\Return $A = (A_1, \ldots, A_n)$}
        \end{algorithmic}
\end{algorithm}

\begin{theorem}\label{thm:incentive-ratio-of-round-robin-for-submodular-valuations}
    Mechanism~\ref{alg:round-robin-submodular} admits an incentive ratio of $n$ for submodular valuations.
\end{theorem}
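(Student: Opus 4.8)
The plan is to prove the bound for every agent $i$ simultaneously, so that no separate reduction to the first agent is needed. Fix the reports of all agents other than $i$ and let $A_i$ be the bundle agent $i$ receives when reporting truthfully. Let $F$ be the set of goods received by agents $1,\dots,i-1$ during the first round. Since these $i-1$ picks are made before agent $i$ ever moves, they depend only on $v_1,\dots,v_{i-1}$ and not on agent $i$'s report; hence $F$ is identical in the truthful run and in every manipulated run, and agent $i$ can never obtain a good of $F$. Consequently any bundle $B$ that agent $i$ obtains under any report satisfies $B\subseteq G\setminus F$, so by monotonicity it suffices to prove the single inequality $v_i(G\setminus F)\le n\cdot v_i(A_i)$; dividing by $v_i(A_i)$ (the truthful utility) then bounds agent $i$'s manipulation gain by $n$, where the degenerate case $v_i(A_i)=0$ forces $v_i(G\setminus F)=0$ and is handled directly.

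The heart is the following claim: writing $A_j'=A_j\setminus F$ for each $j\ne i$ (so $A_j'=A_j$ for $j>i$), we have $v_i(A_j'\mid A_i)\le v_i(A_i)$ for every $j\ne i$. Granting this, I add the bundles $\{A_j'\}_{j\ne i}$ to $A_i$ one at a time in any order: since every intermediate conditioning set contains $A_i$, submodularity in its set form (which follows by telescoping the single-good definition) gives $v_i(A_j'\mid \text{prefix})\le v_i(A_j'\mid A_i)\le v_i(A_i)$ at each of the $n-1$ steps, and the telescoping identity yields $v_i(G\setminus F)=v_i\!\left(A_i\cup\bigcup_{j\ne i}A_j'\right)\le n\cdot v_i(A_i)$, as desired.

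To prove the claim, enumerate agent $i$'s truthful picks in round order as $a_1,a_2,\dots$ and agent $j$'s as $b_1,b_2,\dots$. The decisive structural fact is that in Mechanism~\ref{alg:round-robin-submodular} an agent always selects the available good of maximum marginal value with respect to its current bundle, so whenever a good $g$ is still available at agent $i$'s turn in round $r$ we get $v_i(a_r\mid\{a_1,\dots,a_{r-1}\})\ge v_i(g\mid\{a_1,\dots,a_{r-1}\})$, and submodularity further lowers the right-hand side to $v_i(g\mid A_i)$. For $j>i$, agent $j$ moves after agent $i$ within the same round, so $b_r$ is available at agent $i$'s round-$r$ turn; matching $b_r$ to $a_r$ and summing over rounds gives $v_i(A_j'\mid A_i)\le\sum_r v_i(b_r\mid A_i)\le\sum_r v_i(a_r\mid\{a_1,\dots,a_{r-1}\})=v_i(A_i)$. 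For $j<i$ the good $b_1$ lies in $F$ and is discarded, and agent $j$ moves before agent $i$, so $b_r$ is no longer available at agent $i$'s round-$r$ turn; the fix is to match $b_r$ (for $r\ge2$) to agent $i$'s earlier pick $a_{r-1}$, which is legitimate because $b_r$, chosen in round $r$, is still available at agent $i$'s turn in round $r-1$. Summing the resulting inequalities $v_i(b_r\mid A_i)\le v_i(a_{r-1}\mid\{a_1,\dots,a_{r-2}\})$ over $r\ge2$ telescopes to at most $v_i(A_i)$, since we merely drop agent $i$'s last, nonnegative marginal, establishing the claim in this case as well. The main obstacle is precisely this second case: the marginal-domination argument that is immediate for agents who follow $i$ must be applied with a one-round shift for agents who precede $i$, and one must check the attendant index bookkeeping, namely that $a_{r-1}$ exists, which holds because $\lvert A_j\rvert\le\lvert A_i\rvert+1$ whenever $j<i$.
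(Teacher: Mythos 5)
Your argument establishes only half of the theorem. The statement ``admits an incentive ratio of $n$'' is, as elsewhere in the paper, an exact characterization: the proof must show both that no agent can gain a factor larger than $n$ \emph{and} that Mechanism~\ref{alg:round-robin-submodular} actually has instances where the gain approaches $n$. Your proposal proves only the upper bound. The paper devotes the second half of its proof to an explicit lower-bound construction --- a family of submodular (coverage) valuations on $wn+T$ goods in which agent $1$ truthfully obtains utility $w$ but, by reporting a valuation that skips $g_1,\dots,g_{n-1}$, diverts agents $2,\dots,n$ onto a large pool of decoy goods and collects utility $wn-n+1$, giving a ratio tending to $n$ as $w\to\infty$. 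Nothing in your proposal addresses tightness, so as written it does not prove the theorem.

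The upper-bound portion of your argument is correct, and it is a genuinely different (and somewhat stronger) route than the paper's. The paper shows $v_1(A_1)\ge v_1(G)/n$ by charging, within each round $r$, the $n$ marginals of the goods in $L^r$ against agent $1$'s pick $g^r$ (using submodularity to condition on $G^{r-1}$), and then bounds any manipulated bundle crudely by $v_1(G)$; other agents are handled by the observation that they cannot affect the first $i-1$ stages. You instead bound each $v_i(A_j'\mid A_i)$ by $v_i(A_i)$ via a round-by-round matching of $j$'s picks to $i$'s picks (with the one-round shift for $j<i$), which yields the sharper conclusion $v_i(G\setminus F)\le n\cdot v_i(A_i)$ directly for every agent. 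Your index bookkeeping ($|A_j|\le|A_i|+1$ for $j<i$, availability of $b_r$ at agent $i$'s earlier turn) checks out. But you still need to supply a lower-bound instance to complete the proof.
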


\begin{proof}
    We prove the upper and lower bounds separately.

    \paragraph{Upper bound.}
    For the upper bound, it suffices to consider agent $1$ since every agent $i$ cannot alter the goods chosen in the first $i - 1$ stages by manipulation.
    Let $A$ be the allocation produced by Mechanism~\ref{alg:round-robin-submodular}.
    We prove a slightly stronger statement that when all agents report truthfully, the utility of agent $1$ constitutes at least a $1 / n$ fraction of his value for $G$, i.e., $v_1(A_1) \geq v_1(G) / n$.
    Given this property, the upper bound holds straightforwardly by the monotonicity of valuations.

    Assume that $m$ is a multiple of $n$ as otherwise, we can achieve this by adding dummy goods with value $0$.
    Thus, Mechanism~\ref{alg:round-robin-submodular} consists of $k = m / n$ rounds.
    We renumber the goods so that $g_i$ is the good received by some agent in stage $i$.
    For every $r \in [k]$, denote $g^r = g_{(r - 1) n + 1}$ as the good received by agent $1$ at round $r$, $L^r = \{g_{(r - 1) n + 1}, g_{(r - 1) n + 2}, \ldots, g_{rn}\}$ as the set of goods received by some agents at round $r$, and $G^r = \{g^1, g^2, \ldots, g^r\}$ as the set of goods received by agent $1$ until the end of round $r$.
    In particular, let $G^0 = \emptyset$.
    By the description of Mechanism~\ref{alg:round-robin-submodular},
    \begin{align}\label{eqn:rr-favor-submodular}
        v_1(g^r \mid G^{r - 1}) = \max_{g \in L^r} v_1(g \mid G^{r - 1})
    \end{align}
    for every $r \in [k]$.
    As a result,
    \begin{align*}
        v_1(G)
        &= \sum_{k=1}^m v_1(g_k \mid \{g_1, \ldots, g_{k - 1}\})
        \leq \sum_{r=1}^k \sum_{g \in L^r} v_1(g \mid G^{r - 1})\\
        &\leq \sum_{r=1}^k n \cdot v_1(g^r \mid G^{r - 1}) 
        = n \cdot v_1(G^k)
        = n \cdot v_1(A_1),
    \end{align*}
    where the first inequality holds by the submodularity of $v_1$ and the second inequality holds by \eqref{eqn:rr-favor-submodular}.
    Therefore, $v_1(A_1) \geq v_1(G) / n$, concluding the proof.

    \paragraph{Lower bound.}
    Let $w$ be a large positive integer, and let $T\geq wn^2$.
    We construct an instance with $n$ agents and $m=wn+T$ goods.
    The set of goods is partitioned by $G_1=\{g_1,\ldots,g_{wn}\}$ and $G_2=\{g_{wn+1},\ldots,g_{wn+T}\}$.
    Let $v_1$ be additive and defined as follows:
    \begin{align*}
        v_1(g) =
        \begin{cases}
            1, & g \in G_1,\\
            0, & g \in G_2.
        \end{cases}
    \end{align*}
    For every $i \in \{2, \ldots, n\}$, to define $v_i$, we first define an additive function $u_i$.
    Let 
    $$C_i=\{g_i,g_{n+i},g_{2n+i},\ldots,g_{(w-1)n+i}\}\subseteq G_1,$$ 
    and $u_i$ is defined as
    \begin{align*}
        u_i(g) =
        \begin{cases}
            w, & g = g_{i - 1}, \\
            2, & g \in C_i, \\
            2, & g \in G_2, \\
            0, & \text{otherwise}.
        \end{cases}
    \end{align*}
    Now, agent $i$'s valuation is defined as
    \begin{align*}
        v_i(S) =
        \begin{cases}
            u_i(S)-|C_i\cap S|, & g_{i-1}\in S, \\
            u_i(S), & \text{otherwise}.
        \end{cases}
    \end{align*}
    To prove that $v_i$ is submodular, we interpret it as a \emph{coverage function}.
    Suppose each $g\in C_i\cup G_2$ corresponds to a set that contains $2$ elements, and every pair of sets in $C_i\cup G_2$ are disjoint.
    Suppose that $g_{i-1}$ corresponds to a set that contains $w$ elements such that $g_{i-1}$ and each $g\in C_i$ intersect at exactly one element.
    It is easy to see that $v_i$ describes the corresponding coverage function and is hence submodular since every coverage function is submodular~\cite{DBLP:books/cu/p/0001G14}.

    If agent $1$ reports $v_1$ truthfully, it is easy to check that for the first $w$ rounds, each agent $i$ receives $C_i$, and this characterizes the allocation of $G_1$.
    As a result, under truthful telling, the utility of agent $1$ is $w$.

    Now, suppose that agent $1$ reports the additive valuation $v_1'$ satisfying
    \begin{align*}
        v_1'(g) =
        \begin{cases}
            1, & g\in G_1\setminus\{g_1,\ldots,g_{n-1}\}, \\
            0, & g\in G_2\cup\{g_1,\ldots,g_{n-1}\}.
        \end{cases}
    \end{align*}
    At the first round, agent $1$ receives $g_n$, and every agent $i \in \{2, \ldots, n\}$ receives $g_{i-1}$.
    At all subsequent rounds, for every $i \in \{2, \ldots, n\}$, the marginal gain of every $g\in C_i$ with respect to $v_i$ is only $1$, and hence, agent $i$ will pick goods from $G_2$.
    Our construction with $T$ being set large enough ensures that agents $2, \ldots, n$ will only pick goods from $G_2$ after the first round.
    As a result, agent $1$ will receive $\{g_n,g_{n+1},\ldots,g_{wn}\}$, which is worth $wn-n+1$ for him.

    Therefore, the incentive ratio of Mechanism~\ref{alg:round-robin-submodular} is lower bounded by $(wn-n+1)/w$, which approaches to $n$ as $w\rightarrow\infty$.
\end{proof}

\section{Discussion and Future Directions}

In this paper, we provide both positive and negative results for the incentive ratio achievable by fair mechanisms for various categories of valuations and leave many open problems.
The most interesting future direction is to close the gaps between the incentive ratio upper and lower bounds.

In addition, we only consider additive, (subadditive) cancelable, and submodular valuations in this paper, while the fair division problem with other valuation classes has also received extensive attention \cite{DBLP:conf/aaai/ChaudhuryGM21,DBLP:conf/ijcai/AkramiRS22}.
Hence, it would be intriguing to investigate broader valuation classes.
In particular, for \emph{fractionally subadditive (XOS)} valuations, which constitute a strict superset of submodular valuations and a strict subset of subadditive valuations, we show in Appendix~\ref{sec:xos} that Mechanism~\ref{alg:round-robin-submodular} admits an incentive ratio of $\lceil m / n \rceil$ and does not provide any fairness guarantee.
Moreover, in Appendix~\ref{sec:envy-graph-procedure}, we show that the \emph{Envy-Graph Procedure} mechanism, which produces EF1 allocations for general valuations \cite{DBLP:conf/sigecom/LiptonMMS04} and, to the best of our knowledge, remains the only known EF1 mechanism even for submodular valuations, admits an infinite incentive ratio for additive valuations.

Finally, the results on divisible resource allocations suggest that allowing randomization usually leads to substantial improvements in incentive guarantees \cite{aziz2014cake,DBLP:journals/corr/abs-2308-08903,mossel2010truthful}.
Hence, it is also natural to study the incentive ratio of randomized fair mechanisms in the indivisible setting.

\bibliographystyle{alpha}
\bibliography{references}

\clearpage
\appendix
\section{Proof of \Cref{thm:lb-ef1-subaddi-cancel}}
\label{sec:proof-subaddi-cancel}

Assume for contradiction that a $(\varphi - 1)$-EF1 mechanism $\calM$ for subadditive cancelable valuations exists with an incentive ratio of $\alpha$ satisfying $1 \leq \alpha < \varphi$.
Suppose that there are $n = 2$ subadditive cancelable agents and $m = 7$ goods.
For every $i \in [2]$, denote $N_i$ as the set of goods $\alpha$-controlled by agent $i$ with respect to $\calM$.
By \Cref{lem:control-lemma}, $(N_1, N_2)$ forms a partition of $G$.
Without loss of generality, assume that $|N_1| \geq 4$ and $\{g_1, g_2, g_3, g_4\} \subseteq N_1$.
Denote $G' = \{g_1, g_2, g_3, g_4\}$, and the value of every constructed subadditive cancelable valuation $v$ in the proof is independent of goods in $G \setminus G'$, i.e., $v(S) = v(S \cap G')$ for every $S \subseteq G$.
For simplicity, we assume goods in $G \setminus G'$ always to be assigned to agent $1$, and we omit them when describing valuations and allocations.

We first define the only non-additive valuation in the proof.
Let $\epsilon > 0$ be an arbitrary real number satisfying $\epsilon < 0.1$ and $\frac{\varphi^2}{\varphi + \epsilon} > \alpha$, and let $v_2$ be a valuation satisfying
\begin{align*}
    v_2(S) =
    \begin{cases}
        0, & |S| = 0,\\
        1, & |S| = 1,\\
        \varphi + \epsilon, & |S| = 2,\\
        \varphi^2, & |S| = 3,\\
        \varphi^2 + \epsilon, & |S| = 4
    \end{cases}
\end{align*}
for every $S \subseteq G'$.
To see that $v_2$ is subadditive, it is easy to verify that $v_2(S \cup T) \leq v_2(S) + v_2(T)$ for all $S, T \subseteq G'$.
Moreover, to see that $v_2$ is cancelable, notice that for all $S, T \subseteq G'$, $v_2(S) > v_2(T)$ iff $|S| > |T|$.
Hence, for all $S, T \subseteq G'$ and $g \in G' \setminus (S \cup T)$, if $v_2(S + g) > v_2(T + g)$, then $|S \cup \{g\}| > |T \cup \{g\}|$, which implies that $|S| > |T|$ and thereby $v_2(S) > v_2(T)$.

We emphasize again that all valuations in the proof except for $v_2$ are additive, and the first profile $\bdv^{(0)} = (v_1, v_2)$ is defined as

\begin{table}[H]
\centering
\begin{tabular}{@{}cccccc@{}}
\toprule
\multicolumn{1}{l}{}          &       & $g_1$ & $g_2$ & $g_3$ & $g_4$ \\ \midrule
\multirow{2}{*}{$\bdv^{(0)}$} & $v_1$ & $1$   & $1$   & $1$   & $1$   \\
                              & $v_2$ &  &   &   &   \\ \bottomrule
\end{tabular}
\end{table}

\noindent By the $(\varphi - 1)$-EF1 property of $\calM$, $|\calM_1(\bdv^{(0)})| = |\calM_2(\bdv^{(0)})| = 2$.
Without loss of generality, assume that $\calM_1(\bdv^{(0)}) = \{g_1, g_2\}$ and $\calM_2(\bdv^{(0)}) = \{g_3, g_4\}$.

Let $\delta$ be an arbitrary real number with $\delta \geq 5$, and we consider the next profile $\bdv^{(1)} = (v_1', v_2)$ where

\begin{table}[H]
\centering
\begin{tabular}{@{}cccccc@{}}
\toprule
\multicolumn{1}{l}{}          &       & $g_1$ & $g_2$ & $g_3$ & $g_4$ \\ \midrule
\multirow{2}{*}{$\bdv^{(1)}$} & $v_1'$ & $\delta$   & $\varphi \delta$   & $0$   & $0$   \\
                              & $v_2$ &    &    &    &    \\ \bottomrule
\end{tabular}
\end{table}

\noindent We claim that $\calM_1(\bdv^{(1)}) = \{g_1, g_2\}$ and $\calM_2(\bdv^{(1)}) = \{g_3, g_4\}$.
Firstly, by the $(\varphi - 1)$-EF1 property of $\calM$, $|\calM_2(\bdv^{(1)})| \geq 2$.
Moreover, if $|\{g_1, g_2\} \cap \calM_1(\bdv^{(1)})| < 2$, by deviating from $\bdv^{(1)}$ to $\bdv^{(0)}$, agent $1$ can increase his utility in $\bdv^{(1)}$ by a factor of
\begin{align*}
    \frac{v_1'(\calM_1(\bdv^{(0)}))}{v_1'(\calM_1(\bdv^{(1)}))}
    \geq \frac{v_1'(\{g_1, g_2\})}{v_1'(g_2)}
    = \frac{\varphi + 1}{\varphi}
    = \varphi
    > \alpha,
\end{align*}
violating the incentive ratio $\alpha$ of $\calM$.
Hence, $\{g_1, g_2\} \subseteq \calM_1(\bdv^{(1)})$, and it follows that $\calM_1(\bdv^{(1)}) = \{g_1, g_2\}$ and $\calM_2(\bdv^{(1)}) = \{g_3, g_4\}$.

We proceed to the next profile $\bdv^{(2)} = (v_1'', v_2)$ where

\begin{table}[H]
\centering
\begin{tabular}{@{}cccccc@{}}
\toprule
\multicolumn{1}{l}{}          &       & $g_1$ & $g_2$ & $g_3$ & $g_4$ \\ \midrule
\multirow{2}{*}{$\bdv^{(2)}$} & $v_1''$ & $\varphi \delta$   & $\delta$   & $0$   & $0$   \\
                              & $v_2$ &    &    &    &    \\ \bottomrule
\end{tabular}
\end{table}

\noindent Analogous to $\bdv^{(1)}$, we can show that $\calM_1(\bdv^{(2)}) = \{g_1, g_2\}$ and $\calM_2(\bdv^{(2)}) = \{g_3, g_4\}$.

In the next profile, we modify the valuation of agent $2$.
Define $\bdv^{(3)} = (v_1', v_2')$ where

\begin{table}[H]
\centering
\begin{tabular}{@{}cccccc@{}}
\toprule
\multicolumn{1}{l}{}          &       & $g_1$ & $g_2$ & $g_3$ & $g_4$ \\ \midrule
\multirow{2}{*}{$\bdv^{(3)}$} & $v_1'$ & $\delta$   & $\varphi \delta$   & $0$   & $0$   \\
                              & $v_2'$ & $0$   & $\delta$   & $1$   & $1$   \\ \bottomrule
\end{tabular}
\end{table}

\noindent We claim that $\calM_1(\bdv^{(3)}) = \{g_1, g_2\}$ and $\calM_2(\bdv^{(3)}) = \{g_3, g_4\}$.
Firstly, since agent $1$ $\alpha$-strongly desires $g_2$ in $\bdv^{(3)}$, $g_2 \in \calM_1(\bdv^{(3)})$ by the assumption that agent $1$ $\alpha$-controls $g_2$ with respect to $\calM$.
Moreover, if $|\{g_3, g_4\} \cap \calM_2(\bdv^{(3)})| < 2$, by deviating from $\bdv^{(3)}$ to $\bdv^{(1)}$, agent $2$ can increase his utility in $\bdv^{(3)}$ by a factor of
\begin{align*}
    \frac{v_2'(\calM_2(\bdv^{(1)}))}{v_2'(\calM_2(\bdv^{(3)}))}
    \geq 2
    > \alpha,
\end{align*}
violating the incentive ratio $\alpha$ of $\calM$.
Hence, $\{g_3, g_4\} \subseteq \calM_2(\bdv^{(3)})$.
Finally, if $|\calM_2(\bdv^{(3)})| > 2$, by deviating from $\bdv^{(1)}$ to $\bdv^{(3)}$, agent $2$ can increase his utility in $\bdv^{(1)}$ by a factor of
\begin{align*}
    \frac{v_2(\calM_2(\bdv^{(3)}))}{v_2(\calM_2(\bdv^{(1)}))}
    \geq \frac{\varphi^2}{\varphi + \epsilon}
    > \alpha,
\end{align*}
where the last inequality holds by the definition of $\epsilon$, violating the incentive ratio $\alpha$ of $\calM$.
As a result, $|\calM_2(\bdv^{(3)})| \leq 2$, and it follows that $\calM_1(\bdv^{(3)}) = \{g_1, g_2\}$ and $\calM_2(\bdv^{(3)}) = \{g_3, g_4\}$.

In the next profile, we manage to allocate $g_2$ to agent $2$.
Define $\bdv^{(4)} = (v_1'', v_2'')$ where

\begin{table}[H]
\centering
\begin{tabular}{@{}cccccc@{}}
\toprule
\multicolumn{1}{l}{}          &       & $g_1$ & $g_2$ & $g_3$ & $g_4$ \\ \midrule
\multirow{2}{*}{$\bdv^{(4)}$} & $v_1''$ & $\varphi \delta$   & $\delta$   & $0$   & $0$   \\
                              & $v_2''$ & $\delta$   & $\delta$   & $1$   & $1$   \\ \bottomrule
\end{tabular}
\end{table}

\noindent We claim that $g_1 \in \calM_1(\bdv^{(4)})$ and $g_2 \in \calM_2(\bdv^{(4)})$.
Firstly, since agent $1$ $\alpha$-strongly desires $g_1$ in $\bdv^{(4)}$, $g_1 \in \calM_1(\bdv^{(4)})$ by the assumption that agent $1$ $\alpha$-controls $g_1$ with respect to $\calM$.
Moreover, if $\{g_1, g_2\} \subseteq \calM_1(\bdv^{(4)})$, then $\calM(\bdv^{(4)})$ is not $(\varphi - 1)$-EF1 for agent $2$, violating the $(\varphi - 1)$-EF1 property of $\calM$.
Hence, $|\{g_1, g_2\} \cap \calM_1(\bdv^{(4)})| \leq 1$, and it follows that $g_1 \in \calM_1(\bdv^{(4)})$ and $g_2 \in \calM_2(\bdv^{(4)})$.

We present our final profile to derive a contradiction.
Define $\bdv^{(5)} = (v_1'', v_2')$ where

\begin{table}[H]
\centering
\begin{tabular}{@{}cccccc@{}}
\toprule
\multicolumn{1}{l}{}          &       & $g_1$ & $g_2$ & $g_3$ & $g_4$ \\ \midrule
\multirow{2}{*}{$\bdv^{(5)}$} & $v_1''$ & $\varphi \delta$   & $\delta$   & $0$   & $0$   \\
                              & $v_2'$ & $0$   & $\delta$   & $1$   & $1$   \\ \bottomrule
\end{tabular}
\end{table}

\noindent Firstly, if $|\{g_1, g_2\} \cap \calM_1(\bdv^{(5)})| < 2$, by deviating from $\bdv^{(5)}$ to $\bdv^{(3)}$, agent $1$ can increase his utility in $\bdv^{(5)}$ by a factor of
\begin{align*}
    \frac{v_1''(\calM_1(\bdv^{(3)}))}{v_1''(\calM_1(\bdv^{(5)}))}
    \geq \frac{v_1''(\{g_1, g_2\})}{v_1''(\{g_1\})}
    = \frac{\varphi + 1}{\varphi}
    = \varphi
    > \alpha,
\end{align*}
violating the incentive ratio $\alpha$ of $\calM$.
Hence, $\{g_1, g_2\} \subseteq \calM_1(\bdv^{(5)})$.
However, by deviating from $\bdv^{(5)}$ to $\bdv^{(4)}$, agent $2$ can increase his utility in $\bdv^{(5)}$ by a factor of
\begin{align*}
    \frac{v_2'(\calM_2(\bdv^{(4)}))}{v_2'(\calM_2(\bdv^{(5)}))}
    \geq \frac{v_2'(g_2)}{v_2'(\{g_3, g_4\})}
    = \frac{\delta}{2}
    > \alpha,
\end{align*}
violating the incentive ratio $\alpha$ of $\calM$.
This concludes the proof of \Cref{thm:lb-ef1-subaddi-cancel}.

\section{XOS Valuations}
\label{sec:xos}

In this section, we consider \emph{fractionally subadditive (XOS)} valuations and show that for $n \geq 2$ and $m \geq n$, the incentive ratio of Mechanism~\ref{alg:round-robin-submodular} is $\lceil m / n \rceil$.
Recall that a valuation $v$ is XOS if there exists a finite set of additive functions $\{f_1, \ldots, f_{\alpha}\}$ such that $v(S) = \max_{k \in [\alpha]} f_k(S)$ for every $S \subseteq G$.
Our analysis also implies that for all $n \geq 2$ and $m \geq n$, there exists an instance with $n$ agents and $m$ goods such that the marginal value of each good lies between $[0, 1]$, and the allocation $A$ produced by Mechanism~\ref{alg:round-robin-submodular} admits a maximum envy of $\Theta(m / n)$, i.e., $\max_{i \neq j} (v_i(A_j) - v_i(A_i)) = \Theta(m / n)$.
Consequently, Mechanism~\ref{alg:round-robin-submodular} does not provide any meaningful fairness guarantee for XOS valuations as $m$ tends to infinity.

\begin{theorem}\label{thm:RR-XOS}
    For $n \geq 2$ and $m \geq n$, Mechanism~\ref{alg:round-robin-submodular} admits an incentive ratio of $\lceil m / n \rceil$ for XOS valuations.
\end{theorem}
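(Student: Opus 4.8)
The plan is to prove the matching upper and lower bounds separately, leaning on the subadditivity of XOS valuations for the upper bound and on a compounded version of the submodular construction for the lower bound. Throughout, let $k = \lceil m / n \rceil$ denote the number of rounds.

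\textbf{Upper bound.} I would fix an arbitrary agent $i$ and bound his gain from misreporting by $k$. The crucial observation is that the goods removed in the first $i-1$ stages are taken by agents $1, \ldots, i-1$, whose choices are unaffected by agent $i$'s report; hence the set $R$ of goods available when agent $i$ makes his first pick is \emph{identical} under truthful reporting and under any manipulation. Let $A_i$ and $A_i'$ denote agent $i$'s bundle under truthful reporting and under some manipulation. When reporting truthfully, agent $i$'s first pick is $\arg\max_{g \in R} v_i(g)$ (since his bundle is empty, marginal value equals value), so by monotonicity $v_i(A_i) \geq \max_{g \in R} v_i(g)$. Under any report, agent $i$ receives at most $k$ goods, all lying in $R$, so since every XOS valuation is subadditive, $v_i(A_i') \leq \sum_{g \in A_i'} v_i(g) \leq k \cdot \max_{g \in R} v_i(g) \leq k \cdot v_i(A_i)$. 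Taking the worst agent yields an incentive ratio of at most $k$.

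\textbf{Lower bound.} For the matching lower bound I would exhibit, for every $\epsilon > 0$, an XOS instance on which Mechanism~\ref{alg:round-robin-submodular} has incentive ratio at least $k - \epsilon$, generalizing the construction in the proof of \Cref{thm:incentive-ratio-of-round-robin-for-submodular-valuations}. The manipulating agent (say agent $1$) is given an XOS valuation $v_1 = \max(f, g)$, where $f$ is additive and assigns value $1$ to each good of a \emph{treasure set} $H$ with $|H| = k$, while $g$ places a slightly larger value $1 + \gamma$ on a single \emph{decoy} good. The role of the two clauses is that, along the truthful greedy trajectory, agent $1$ is first lured to the decoy, after which adding any single good of $H$ has zero marginal value (one copy of $f$ never beats the decoy), so agent $1$ does not aggressively compete for $H$; the remaining agents, whose valuations are coverage functions as in the submodular construction, then sweep up the goods of $H$ round by round, leaving agent $1$ with true value $\approx 1 + \gamma$. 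When agent $1$ instead reports that he covets $H$, he seizes one good of $H$ in each of the $k$ rounds, and the companion valuations are engineered so that agent $1$'s altered early picks \emph{divert} the other agents onto worthless goods in every round, so that all of $H$ survives for agent $1$, whose true value is then $f(H) = k$. Letting $\gamma \to 0$ drives the ratio to $k$. The same instance yields the stated envy bound, as the other agents value agent $1$'s truthful bundle at $\Theta(m/n)$ while valuing their own at $O(1)$.

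The main obstacle is the design and verification of the companion valuations of agents $2, \ldots, n$. Unlike the submodular construction, where a single round of diversion loses only a factor of $n$, here the diversion must be sustained across all $k$ rounds so that the per-round loss \emph{compounds} into the full factor $\lceil m/n \rceil$: in the truthful run these agents must greedily absorb $H$, whereas in the manipulated run the one different good taken by agent $1$ in each round must flip their greedy choice onto a fresh block of junk goods. Guaranteeing both behaviors simultaneously, while keeping every valuation XOS and every marginal in $[0,1]$ so that the envy statement also follows, is the delicate part, and is where I expect the bulk of the casework to lie.
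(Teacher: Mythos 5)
Your upper bound is correct and is essentially the paper's argument: the paper proves the same statement (each agent's gain is bounded by the number of goods he receives) by establishing $v_1(g \mid S) \leq v_1(g)$ for XOS valuations and telescoping, which is just your subadditivity step written out; the observation that the available pool at agent $i$'s first stage is unaffected by his report, and that his truthful utility is at least the largest single-good value in that pool, is also the paper's.

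The lower bound, however, is a plan rather than a proof, and the one concrete claim you do make about the manipulator's truthful trajectory has a gap. You assert that after agent $1$ takes the decoy, ``adding any single good of $H$ has zero marginal value,'' and conclude he does not compete for $H$. That protects only one round: if agent $1$ ever acquires even one good of $H$ (which the zero-marginal tie forces to be decided by the lexicographic tie-breaking you never pin down), then from that point $f$ overtakes the decoy clause and every further good of $H$ has marginal value $\approx 1$, so greedy agent $1$ would sweep up $H$ himself and the ratio collapses. Keeping agent $1$ permanently off $H$ requires arranging the good indices so that, in every all-zero-marginal round, lexicographic tie-breaking sends him to junk --- this is exactly why the paper places the manipulator's truthful bundle $G_1$ at the lowest indices. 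The second and larger omission is the companion valuations, which you explicitly defer but which are the actual content of the bound. The paper's resolution is worth noting because it is much lighter than the ``compounding casework'' you anticipate: only \emph{one} other agent (agent $2$) needs a genuinely XOS valuation $\max\{f_1^2, f_2^2\}$, where $f_2^2$ boosts a single anchor good $g_{s_1+1}$ of the treasure group to value $2$. Truthfully, agent $2$ grabs the anchor first, which locks him onto the clause $f_2^2$ and makes him absorb all of $G_2$ (so agent $1$ is never tempted back). Under manipulation, agent $1$ steals the anchor in round $1$, agent $2$ falls back to his first clause $f_1^2$ supported on $G_1$, and the max structure then \emph{automatically} sustains the diversion for all remaining rounds --- no per-round engineering is needed. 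Agents $3,\dots,n$ are simply additive on their own groups. The paper's instance also gives the manipulator truthful utility exactly $1$ and manipulated utility exactly $\lceil m/n\rceil$, so no $\gamma \to 0$ limit is required. Until you exhibit valuations with these verified dynamics, the lower bound half of your argument is not established.
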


\begin{proof}
    We prove the upper and lower bounds separately.

    \paragraph{Upper bound.}
    We prove a stronger statement that the incentive ratio for each agent cannot exceed the number of goods he receives, which is at most $\lceil m / n \rceil$.
    In particular, we prove this statement for agent $1$, and the statement for agent $i \in \{2, \ldots, n\}$ can be reduced to that for agent $1$ by noticing that agent $i$ cannot alter the outcomes in the first $i - 1$ stages by manipulation.
    Let $v_1(S) = \max_{k \in [\alpha]} f_k(S)$ for every $S \subseteq G$, where $f_1, \ldots, f_{\alpha}$ are additive functions.
    We first show that for all $S \subseteq G$ and $g \in G \setminus S$,
    \begin{align}\label{eqn:margin-mono}
        v_1(g \mid S) \leq v_1(g).
    \end{align}
    This is because
    \begin{align*}
        v_1(g \mid S)
        &= v_1(S + g) - v_1(S)
        = \max_{k \in [\alpha]} f_k(S + g) - \max_{k \in [\alpha]} f_k(S)\\
        &\leq \max_{k \in [\alpha]} (f_k(S + g) - f_k(S))
        = \max_{k \in [\alpha]} f_k(g)
        = v_1(g).
    \end{align*}

    Notice that the number of goods received by agent $1$, despite his reported valuation, is exactly $s := \lceil m / n \rceil$, and denote $G' = \{g_1', \ldots, g_s'\}$ and $G'' = \{g_1'', \ldots, g_s''\}$ as the sets of goods received by agent $1$ when he reports truthfully and manipulates, respectively, where $g_i'$ and $g_i''$ are the goods allocated to him at round $i$.
    It suffices to show that $v_1(G'') \leq s \cdot v_1(G')$.
    By the description of Mechanism~\ref{alg:round-robin-submodular},
    \begin{align}\label{eqn:max-first-good}
        v_1(g_1') = \max_{g \in G} v_1(g).
    \end{align}
    As a result,
    \begin{align*}
        v_1(G'')
        = \sum_{k=1}^s v_1(g_k'' \mid \{g_1'', \ldots, g_{k - 1}''\})
        \leq \sum_{k=1}^s v_1(g_k'')
        \leq \sum_{k=1}^s v_1(g_1')
        = s \cdot v_1(g_1')
        \leq s \cdot v_1(G'),
    \end{align*}
    where the first inequality holds by \eqref{eqn:margin-mono}, the second inequality holds by \eqref{eqn:max-first-good}, and the third inequality holds by the monotonicity of $v_1$.

    \paragraph{Lower bound.}
    For every $i \in N$, let $s_i := \lceil (m-i+1) / n \rceil$ denote the number of goods received by agent $i$, and it holds that $\sum_{i \in N} s_i = m$.
    Note that $|s_i - s_j| \leq 1$ for all $i, j \in N$.
    Partition $G$ into $n$ groups $G_1, \ldots, G_n$ such that $G_1 = \{g_1, \ldots, g_{s_1}\}$, $G_2 = \{g_{s_1 +1}, \ldots, g_{s_1 + s_2}\}$, and so on.
    We will construct an instance such that every agent $i \in N$ receives bundle $G_i$ when all agents report truthfully.
    Then, we show that by misreporting, agent $1$ can obtain the entire $G_2$ and, if $s_1 = s_2 + 1$, an additional good of $g_{s_1}$.
    Now, we formally describe our hard instance.
    For every $S \subseteq G$, let $v_1(S) = \max\{f_1^1(S), f_2^1(S)\}$ where additive functions $f_1^1, f_2^1$ are defined as
    \begin{align*}
        f_1^1(g) = 
        \begin{cases}
            1, & g = g_1,\\
            0, & g \in G \setminus \{g_1\},
        \end{cases}
        \qquad \text{and} \qquad
        f_2^1(g) =
        \begin{cases}
            0, & g \in G \setminus G_2 \setminus \{g_{s_1}\},\\
            1, & g \in G_2 \cup \{g_{s_1}\},
        \end{cases}
    \end{align*}
    and $v_2(S) = \max\{f_1^2(S), f_2^2(S)\}$ where additive functions $f_1^2, f_2^2$ are defined as
    \begin{align*}
        f_1^2(g) = 
        \begin{cases}
            1, & g \in G_1,\\
            0, & g \in G \setminus G_1,
        \end{cases}
        \qquad \text{and} \qquad
        f_2^2(g) =
        \begin{cases}
            0, & g \in G \setminus G_2,\\
            2, & g = g_{s_1 + 1},\\
            1, & g \in G_2 \setminus \{g_{s_1 + 1}\}.
        \end{cases}
    \end{align*}
    For every agent $i \in \{3, \ldots, n\}$, let $v_i$ be an additive function satisfying
    \begin{align*}
        v_i(g) =
        \begin{cases}
            0, & g \in G \setminus G_i,\\
            1, & g \in G_i.
        \end{cases}
    \end{align*}
    
    On one hand, assume that all agents report truthfully.
    At the first round, agent $1$ receives $g_1$, agent $2$ receives $g_{s_1 + 1}$, and every agent $i \in \{3, \ldots, n\}$ receives a good in $G_i$.
    At the subsequent rounds, the marginal value of each remaining good is $0$ for agent $1$, and hence, agent $1$ prefers goods in $G_1$ to all other remaining goods due to the lexicographic tie-breaking rule; the marginal value for agent $2$ is $1$ for each remaining good in $G_2$ and is $0$ for all other remaining goods, and hence, agent $2$ prefers goods in $G_2$; every agent $i \in \{3, \ldots, n\}$ prefers goods in $G_i$.
    Consequently, the resulting allocation is $(G_1, \ldots, G_n)$ and the utility of agent $1$ is $v_1(G_1) = 1$.
    
    On the other hand, assume that agent $1$ manipulates his valuation as $v_1'$ where $v_1'$ is an additive function satisfying
    \begin{align*}
        v_1'(g) = 
        \begin{cases}
            0, & g \in G \setminus \{G_2\} \setminus \{g_{s_1}\},\\
            1, & g = g_{s_1}, \\
            2, & g \in G_2.
        \end{cases}
    \end{align*}
    Note that agent $1$ favors goods in $G_2$ the most and $g_{s_1}$ the second.
    At the first round, agent $1$ receives $g_{s_1 + 1}$, agent $2$ receives $g_1$, and every agent $i \in \{3, \ldots, n\}$ receives a good in $G_i$.
    At the subsequent rounds, the marginal value for agent $2$ is $1$ for each remaining good in $G_1$ and is $0$ for all other remaining goods, and hence, agent $2$ prefers goods in $G_1$; every agent $i \in \{3, \ldots, n\}$ prefers goods in $G_i$.
    Finally, at the last round, if only one good is left, which must be $g_{s_1}$ by the tie-breaking rule, then it will be received by agent $1$.
    As a result, if $s_1 = s_2$, then the resulting allocation is $(G_2, G_1, G_3, G_4, \ldots, G_n)$; otherwise, the resulting allocation is $(G_2 \cup \{g_{s_1}\}, G_1 \setminus \{g_{s_1}\}, G_3, G_4, \ldots, G_n)$.
    In both cases, the utility of agent $1$ is $s_1 = \lceil m / n \rceil$.
    Therefore, the incentive ratio of Mechanism~\ref{alg:round-robin-submodular} is lower bounded by $\lceil m / n \rceil$.
\end{proof}

As a corollary of the proof of \Cref{thm:RR-XOS}, we show that Mechanism~\ref{alg:round-robin-submodular} does not provide any meaningful fairness guarantees for XOS valuations.

\begin{corollary}\label{cor:rr-xos}
    Assume that all marginal values of goods lie between $[0, 1]$.
    For all $n \geq 2$ and $m \geq n$, an instance with $n$ XOS agents and $m$ goods exists such that the allocation $A$ produced by Mechanism~\ref{alg:round-robin-submodular} admits a maximum envy of $\Theta(m / n)$, i.e., $\max_{i \neq j} (v_i(A_j) - v_i(A_i)) = \Theta(m / n)$.
\end{corollary}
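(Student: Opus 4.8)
The plan is to reuse the hard instance from the lower bound of \Cref{thm:RR-XOS} essentially verbatim, observing that the envy it already produces is $\Theta(m/n)$. Recall that in that construction the goods are partitioned into groups $G_1, \ldots, G_n$ with $|G_i| = s_i = \lceil (m-i+1)/n \rceil$, agent $1$'s XOS valuation is $v_1 = \max\{f_1^1, f_2^1\}$ where $f_1^1$ places weight $1$ on $g_1$ and $f_2^1$ places weight $1$ on each good of $G_2 \cup \{g_{s_1}\}$, and the proof establishes that the truthful run of Mechanism~\ref{alg:round-robin-submodular} outputs the allocation $A = (G_1, \ldots, G_n)$ with $v_1(A_1) = v_1(G_1) = 1$. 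I would invoke this truthful-allocation analysis as a black box.

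The one adjustment needed is to respect the hypothesis that every marginal value lies in $[0,1]$: the original agent $2$ uses $f_2^2(g_{s_1+1}) = 2$, so I would either rescale all valuations by $1/2$ or simply replace agent $2$'s valuation by the additive function that is $1$ on $G_2$ and $0$ elsewhere. In either case the truthful allocation remains $(G_1, \ldots, G_n)$, since agent $2$ still exhausts $G_2$ and the remaining agents their own groups, so the structure used above is preserved; for agent $1$ the bound $v_1(g \mid S) \leq v_1(g) \leq 1$ from \eqref{eqn:margin-mono} already guarantees marginal values in $[0,1]$, and for the other (additive) agents the single-good values are in $\{0,1\}$.

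With the instance fixed, the lower bound on envy is immediate: agent $1$ values agent $2$'s bundle at $v_1(A_2) = v_1(G_2) = f_2^1(G_2) = s_2$, while $v_1(A_1) = 1$, so $v_1(A_2) - v_1(A_1) = s_2 - 1 = \Omega(m/n)$ and hence $\max_{i \neq j}(v_i(A_j) - v_i(A_i)) = \Omega(m/n)$. For the matching upper bound I would use that XOS valuations are fractionally subadditive: writing $v_i(S) = \max_k f_k(S)$ and taking the maximizing component $k^\ast$ at $S = A_j$ gives $v_i(A_j) = f_{k^\ast}(A_j) = \sum_{g \in A_j} f_{k^\ast}(g) \leq \sum_{g \in A_j} v_i(g) \leq |A_j| \leq \lceil m/n \rceil$, where the last steps use $v_i(g) \leq 1$ and $|A_j| \leq \lceil m/n \rceil$. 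Therefore $\max_{i \neq j}(v_i(A_j) - v_i(A_i)) \leq \lceil m/n \rceil = O(m/n)$, and combining the two directions yields the claimed $\Theta(m/n)$.

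I expect this to be routine rather than difficult; the only place demanding care is the normalization step, where one must confirm that forcing all marginal values into $[0,1]$ (by rescaling or by simplifying agent $2$) does not disturb the truthful run that produces $(G_1, \ldots, G_n)$ and keeps $v_1(G_1) = 1$. Once that check is in place, both the $\Omega$ direction (from the explicit gap $s_2 - 1$) and the $O$ direction (from fractional subadditivity and the single-good value bound) follow directly.
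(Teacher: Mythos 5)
Your proof is correct and follows essentially the same route as the paper, which likewise just points to the hard instance from the lower bound of \Cref{thm:RR-XOS} and reads off the gap $v_1(A_2) - v_1(A_1) = v_1(G_2) - v_1(G_1) = s_2 - 1 = \Theta(m/n)$ under truthful reporting. Your two additions --- normalizing agent $2$'s valuation so that \emph{all} agents' marginal values (not just agent $1$'s, which is all the paper checks) lie in $[0,1]$ while preserving the truthful allocation $(G_1, \ldots, G_n)$, and spelling out the $O(m/n)$ upper bound via fractional subadditivity and the single-good bound $v_i(g) \leq 1$ --- are sound and in fact tighten details the paper leaves implicit.
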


\begin{proof}
Note that in the hard instance given in the proof of the lower bound in \Cref{thm:RR-XOS}, all marginal values of goods for agent $1$ lie between $[0, 1]$, and when all agents report truthfully, the allocation $(G_1, \ldots, G_n)$ returned by Mechanism~\ref{alg:round-robin-submodular} satisfies $v_1(G_2) - v_1(G_1) = \Theta(m / n)$.
\end{proof}
\section{Envy-Graph Procedure}
\label{sec:envy-graph-procedure}

In this section, we adopt two implementations of the \emph{Envy-Graph Procedure} mechanism and show that both of them admit an infinite incentive ratio for additive valuations.
To describe Envy-Graph Procedure, we first define the notion of \emph{envy graphs}.
The envy graph of an allocation $A = (A_1, \ldots, A_n)$ includes a vertex for each agent, and a directed edge from $i$ to $j$ exists iff agent $i$ envies agent $j$, i.e., $v_i(A_j) > v_i(A_i)$.
We present the first implementation of Envy-Graph Procedure in Mechanism~\ref{alg:envy-cycle-elimination}, which enumerates all goods in $G$ according to a pre-specified order and ensures that the envy graph is always acyclic.

In each iteration, we first find an unenvied agent $j$, i.e., a source vertex in the envy graph, with respect to the current allocation $A$ (Line~\ref{code:find-unenvied-agent}) and give the good to agent $j$ (Line~\ref{code:give-the-good-to-the-unenvied-agent}).
Then, we eliminate all cycles in the envy graph (Line~\ref{code:eliminate-envy-cycles}).
Specifically, whenever a cycle exists in the envy graph, supposing to be $1 \to 2 \to \ldots \to c \to 1$ without loss of generality, we derive a new allocation $A'$ where $A_i' = A_{(i \bmod c) + 1}$ for every $i \in [c]$ and $A_i' = A_i$ for every $i \in \{c + 1, \ldots, n\}$, and replace $A$ with $A'$.
This elimination process terminates after at most $O(n^2)$ steps as the number of edges strictly decreases each time (see, e.g., \cite[Theorem 6.2]{DBLP:journals/siamdm/PlautR20}).
We show in the following theorem that Mechanism~\ref{alg:envy-cycle-elimination} admits an infinite incentive ratio for additive valuations.

\begin{algorithm}[!htp]
        \caption{Envy-Graph Procedure}
        \label{alg:envy-cycle-elimination}
        \begin{algorithmic}[1]
            \State {$(A_1, \ldots, A_n) \gets (\emptyset, \ldots, \emptyset)$}
            \For {$i = 1, \ldots, m$}
                \State \label{code:find-unenvied-agent} {$j \gets \Call{FindUnenviedAgent}{A_1, \ldots, A_n}$} \Comment{Break ties lexicographically.}
                \State \label{code:give-the-good-to-the-unenvied-agent} {$A_j \gets A_j \cup \{g_i\}$}
                \State \label{code:eliminate-envy-cycles} {$(A_1, \ldots, A_n) \gets \Call{EliminateEnvyCycles}{A_1, \ldots, A_n}$}
            \EndFor
            \State {\Return $A = (A_1, \ldots, A_n)$}
        \end{algorithmic}
\end{algorithm}

\begin{theorem}\label{thm:incentive-ratio-of-envy-cycle-elimination-version1}
    Mechanism~\ref{alg:envy-cycle-elimination} admits an infinite incentive ratio for additive valuations.
\end{theorem}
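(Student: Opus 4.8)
The plan is to exhibit a single instance with $n = 2$ agents and $m = 3$ goods, enumerated by Mechanism~\ref{alg:envy-cycle-elimination} in the fixed order $g_1, g_2, g_3$, on which the truthful utility of agent $1$ is $0$ while a suitable misreport gives him strictly positive true utility. Since the incentive-ratio definition then contains a term of the form (positive)$/0$, this forces an infinite incentive ratio. Concretely, I would let agent $1$'s true additive valuation value only the last good, $v_1(g_1) = v_1(g_2) = 0$ and $v_1(g_3) = 1$, and set agent $2$'s valuation to $v_2(g_1) = 1$, $v_2(g_2) = 0$, $v_2(g_3) = 1$. The manipulation is for agent $1$ to report an additive $v_1'$ with $v_1'(g_1) = 0$ and $v_1'(g_2) = 1$ (the reported value of $g_3$ is irrelevant). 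Recall that the envy graph is built from the \emph{reported} valuations, so agent $1$'s envy edges are computed via $v_1'$ in the manipulated run.

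First I would trace the truthful execution. All agents are initially unenvied, so the lexicographic tie-break hands $g_1$ to agent $1$. Since $v_2(g_1) = 1 > 0$, agent $2$ now envies agent $1$, so agent $1$ is no longer a source; agent $2$, the unique source, receives $g_2$, and as $v_2(\{g_2\}) = 0 < v_2(\{g_1\}) = 1$ it still envies agent $1$ and remains the source, so it also receives $g_3$. The key point is that once agent $2$ holds $\{g_2, g_3\}$ it no longer envies agent $1$, because $v_2(\{g_2, g_3\}) = 1 = v_2(\{g_1\})$; hence the only envy edge throughout is $1 \to 2$ and no cycle ever forms. The cycle-elimination step therefore never swaps bundles, and agent $1$ is left with $\{g_1\}$, of true value $0$.

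Next I would trace the manipulated execution, in which agent $1$'s envy is judged by $v_1'$. Again $g_1$ goes to agent $1$ and agent $2$ becomes the source, so $g_2$ goes to agent $2$. Now, because $v_1'(g_2) = 1 > v_1'(g_1) = 0$, agent $1$ envies agent $2$, while $v_2(g_1) = 1 > v_2(g_2) = 0$ keeps agent $2$ envying agent $1$; this creates the $2$-cycle $1 \to 2 \to 1$, and its elimination swaps the bundles so that agent $1$ holds $\{g_2\}$ and agent $2$ holds $\{g_1\}$. After the swap neither agent envies the other, so both are sources when $g_3$ arrives, and the lexicographic tie-break awards $g_3$ to agent $1$. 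Agent $1$ thus ends with $\{g_2, g_3\}$, whose true value is $v_1(g_2) + v_1(g_3) = 1 > 0$. I would also verify that no further cycle arises at this last step, which holds because agent $1$ (under $v_1'$) prefers $\{g_2,g_3\}$ to $\{g_1\}$, so the edge $1 \to 2$ is absent.

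Combining the two runs, agent $1$ raises his true utility from $0$ to $1$ by misreporting, so the incentive ratio of Mechanism~\ref{alg:envy-cycle-elimination} is infinite for additive valuations. The part requiring the most care is the timing of the cycle relative to the arrival of the valuable good: the manipulation must manufacture a cycle at the \emph{second} step, while agent $2$ still prefers $g_1$ to its current holdings, so that the swap offloads the worthless $g_1$ and resets the envy graph to make agent $1$ a source exactly in time to grab $g_3$. Simultaneously one must check that the analogous swap is \emph{blocked} in the truthful run, which is precisely why $v_2(g_3)$ is chosen so that agent $2$ becomes indifferent (hence non-envious) once it holds $\{g_2,g_3\}$. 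I would therefore double-check all four envy conditions at each of the three steps in both runs, and confirm that the lexicographic tie-break selects agent $1$ whenever both agents are sources.
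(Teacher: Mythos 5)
Your construction is correct: I traced both executions and the claimed allocations $(\{g_1\},\{g_2,g_3\})$ (truthful) and $(\{g_2,g_3\},\{g_1\})$ (after reporting $v_1'$) do come out of Mechanism~\ref{alg:envy-cycle-elimination}, with the envy edges and the single cycle swap exactly as you describe. Your route differs from the paper's in an interesting way. The paper's counterexample ($v_1=[0,0,0]$, $v_2=[\epsilon,\epsilon,1]$, agent $2$ misreporting $[1,0,0]$) never triggers cycle elimination at all: the manipulation simply keeps agent $2$'s envy toward agent $1$ unresolved so that agent $2$ remains the unique source and collects every remaining good. Your manipulation instead has agent $1$ deliberately manufacture the two-cycle $1\to 2\to 1$ so that the \textsc{EliminateEnvyCycles} swap offloads the worthless $g_1$, clears the envy graph, and restores agent $1$ as a source just in time to claim $g_3$. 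So the paper exhibits the fragility of the source-selection rule, while you exhibit the fragility of the swap step; both are legitimate and each illuminates a different failure mode of the mechanism.

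Two small points to clean up. First, your truthful utility for agent $1$ is exactly $0$, so the ratio in the incentive-ratio definition is formally $1/0$; to match the paper's convention (a family of instances whose ratios tend to infinity) you should perturb to $v_1=[\epsilon,0,1]$ for small $\epsilon>0$ --- the truthful run still gives agent $1$ the bundle $\{g_1\}$ of value $\epsilon$, the manipulated run is unchanged because it depends only on $v_1'$ and $v_2$, and the ratio becomes $1/\epsilon\to\infty$. Second, in your description of the truthful run the sentence ``the only envy edge throughout is $1\to 2$'' misstates the direction: during the first two iterations the only edge is $2\to 1$ (agent $2$ envies agent $1$), and the edge $1\to 2$ appears only after $g_3$ is assigned, at which point agent $2$'s envy has vanished because $v_2(\{g_2,g_3\})=v_2(\{g_1\})$; the conclusion that no cycle ever forms is nevertheless correct.
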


\begin{proof}
    Fix $0 < \epsilon < 1$.
    Suppose that there are $n = 2$ agents and $m = 3$ goods with additive valuation profile $v_1 = [0, 0, 0]$ and $v_2 = [\epsilon, \epsilon, 1]$.
    When both agents report truthfully, we demonstrate the intermediate statuses in the execution of Mechanism~\ref{alg:envy-cycle-elimination} with profile $(v_1, v_2)$:
    \begin{itemize}
        \item Initially, $A = (\emptyset, \emptyset)$, and no edge exists in the envy graph.
        \item \textbf{Iteration $i = 1$:} $j = 1$.
        $A = (\{g_1\}, \emptyset)$, and the envy graph contains edge $2 \to 1$.
        \item \textbf{Iteration $i = 2$:} $j = 2$.
        $A = (\{g_1\}, \{g_2\})$, and no edge exists in the envy graph.
        \item \textbf{Iteration $i = 3$:} $j = 1$.
        $A = (\{g_1, g_3\}, \{g_2\})$, and the envy graph contains edge $2 \to 1$.
    \end{itemize}
    Thus, the resulting allocation is $(\{g_1, g_3\}, \{g_2\})$, and the utility of agent $2$ is $\epsilon$.
    
    Suppose that agent $2$ manipulates his valuation as $v_2' = [1, 0, 0]$.
    We demonstrate the intermediate statuses in the execution of Mechanism~\ref{alg:envy-cycle-elimination} with profile $(v_1, v_2')$:
    \begin{itemize}
        \item Initially, $A = (\emptyset, \emptyset)$, and no edge exists in the envy graph.
        \item \textbf{Iteration $i = 1$:} $j = 1$.
        $A = (\{g_1\}, \emptyset)$, and the envy graph contains edge $2 \to 1$.
        \item \textbf{Iteration $i = 2$:} $j = 2$.
        $A = (\{g_1\}, \{g_2\})$, and the envy graph contains edge $2 \to 1$.
        \item \textbf{Iteration $i = 3$:} $j = 2$.
        $A = (\{g_1\}, \{g_2, g_3\})$, and the envy graph contains edge $2 \to 1$.
    \end{itemize}
    Thus, the resulting allocation is $(\{g_1\}, \{g_2, g_3\})$, and the utility of agent $2$ with respect to $v_2$ is $1 + \epsilon$.
    Therefore, the incentive ratio of Mechanism~\ref{alg:envy-cycle-elimination} is lower bounded by $(1 + \epsilon) / \epsilon$.
    Since $\epsilon$ can be arbitrarily small, we conclude the proof.
\end{proof}

\subsection{Another Implementation}

One may suggest a seemingly more efficient implementation of Envy-Graph Procedure.
As presented in Mechanism~\ref{alg:envy-cycle-elimination-with-another-implementation}, instead of specifying an order for inserted goods, the source agent receives his favorite good among the remaining goods in each stage.
This implementation not only preserves the EF1 property, but also produces EFX allocations for identical ordinal preferences, i.e., for all agents $i$ and $j$, and for all goods $g_1$ and $g_2$, $v_i(g_1) \geq v_i(g_2)$ whenever $v_j(g_1) \geq v_j(g_2)$ \cite{DBLP:journals/siamdm/PlautR20}.
We show that such an implementation also admits an infinite incentive ratio for additive valuations.

\begin{algorithm}[!htp]
        \caption{Another Implementation of Envy-Graph Procedure}
        \label{alg:envy-cycle-elimination-with-another-implementation}
        \begin{algorithmic}[1]
            \State {$S \gets G$; $(A_1, \ldots, A_n) \gets (\emptyset, \ldots, \emptyset)$}
            \For {$i = 1, \ldots, m$}
                \State {$j \gets \Call{FindUnenviedAgent}{A_1, \ldots, A_n}$} \Comment{Break ties lexicographically.}
                \State {$g \gets \arg\max_{h \in S} v_j(h)$} \Comment{Break ties lexicographically.}
                \State {$A_j \gets A_j \cup \{g\}$}
                \State {$(A_1, \ldots, A_n) \gets \Call{EliminateEnvyCycles}{A_1, \ldots, A_n}$}
            \EndFor
            \State {\Return $A = (A_1, \ldots, A_n)$}
        \end{algorithmic}
\end{algorithm}

\begin{theorem}\label{thm:incentive-ratio-of-envy-cycle-elimination-version2}
    Mechanism~\ref{alg:envy-cycle-elimination-with-another-implementation} admits an infinite incentive ratio for additive valuations.
\end{theorem}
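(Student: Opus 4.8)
The plan is to mirror the proof of \Cref{thm:incentive-ratio-of-envy-cycle-elimination-version1} by exhibiting an explicit instance with $n = 2$ agents and $m = 3$ goods in which a vanishing truthful utility is pitted against a manipulated utility bounded away from zero. As before, I would take agent $1$ to be a \emph{dummy} with $v_1 = [0,0,0]$, so that agent $1$ never envies anyone and thus has no outgoing edge in the envy graph throughout the execution. The key structural observation, which replaces the fixed insertion order of Mechanism~\ref{alg:envy-cycle-elimination}, is the following: whenever both agents are unenvied, the lexicographic tie-break selects agent $1$, who (being indifferent) grabs the lexicographically first remaining good; consequently agent $1$ consumes goods in the order $g_1, g_2, \ldots$ exactly as long as agent $2$ does \emph{not} (reported-)envy him, while agent $2$ becomes the unique source --- and then takes his reported-favorite remaining good --- precisely once he begins to envy agent $1$.

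With this lever in hand, I would design agent $2$'s true valuation so that, under truthful reporting, agent $1$ sweeps up agent $2$'s favorite good \emph{before} agent $2$ ever envies, leaving agent $2$ with only a low-value leftover; and design the manipulation so that agent $2$ envies immediately after the first stage, becomes the source, and collects the valuable goods himself. Concretely, I would fix $\epsilon \in (0,1)$, set the true valuation $v_2 = [0,1,\epsilon]$, and use the misreport $v_2' = [1,0,0]$. Tracing Mechanism~\ref{alg:envy-cycle-elimination-with-another-implementation} on $(v_1, v_2)$: agent $1$ takes $g_1$ (nobody envies him, as $v_2(g_1)=0$), then takes $g_2$ (still unenvied), at which point agent $2$ finally envies and, as the unique source, is left to take only $g_3$, for a truthful utility of $v_2(g_3) = \epsilon$. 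Tracing on $(v_1, v_2')$: after agent $1$ takes $g_1$, agent $2$ envies at once (since $v_2'(g_1)=1$), and as the unique source over the remaining two stages collects $\{g_2, g_3\}$, for a true utility of $v_2(\{g_2,g_3\}) = 1 + \epsilon$. Combining the two runs shows the incentive ratio is at least $(1+\epsilon)/\epsilon$, which tends to infinity as $\epsilon \to 0^{+}$, establishing the claim.

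The main obstacle --- and essentially the only place the argument differs from that of \Cref{thm:incentive-ratio-of-envy-cycle-elimination-version1} --- is the favorite-good selection in Line~4, which forces me to check stage by stage that the envy-graph edges and the lexicographic tie-breaks are exactly as claimed, so that both the source agent and the good it grabs are the intended ones. Two observations make this routine. First, since agent $1$ is a dummy he has no outgoing edge, so with two agents the envy graph can never contain a cycle; hence the cycle-elimination step is inert and each run is precisely the greedy source-and-favorite process described above. Second, the only edge to track is $2 \to 1$, which is present exactly when agent $2$'s reported value for agent $1$'s current bundle strictly exceeds that of his own bundle; verifying this at each of the three stages --- and confirming, in particular, that the prize $g_2$ is consumed by agent $1$ under truthful play but remains available to agent $2$ under the manipulation --- completes the proof.
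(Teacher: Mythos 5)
Your proof is correct, and it takes a genuinely different route from the paper's. You use a two-agent, three-good instance with a dummy agent $v_1=[0,0,0]$, true valuation $v_2=[0,1,\epsilon]$, and misreport $v_2'=[1,0,0]$; I traced both runs and your claimed allocations are right. Under truthful play, agent $2$ never (weakly) envies agent $1$ until $g_1$ and $g_2$ are both gone (since $v_2(g_1)=0$ and the lexicographic tie-breaks send both goods to agent $1$), so agent $2$ is left with $g_3$ and utility $\epsilon$; under the misreport, the edge $2\to 1$ appears immediately, agent $2$ is the unique source for the last two stages and collects $\{g_2,g_3\}$ worth $1+\epsilon$, giving ratio $(1+\epsilon)/\epsilon\to\infty$. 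The paper instead uses three agents and four goods ($v_1=[1,0.6,0,0.6]$, $v_2=[1,0,0,\epsilon]$, $v_3=[0,1,1,1]$) and a manipulation that triggers the envy-cycle swap in the final iteration, so the manipulator walks away with agent $1$'s bundle. Your construction is simpler and closely mirrors the paper's proof for the first implementation: the dummy agent guarantees the envy graph is acyclic throughout, so \textsc{EliminateEnvyCycles} is inert and the whole execution reduces to a transparent greedy process. What the paper's heavier construction buys is a demonstration that the cycle-elimination step itself can be weaponized by a manipulator, which is a qualitatively different failure mode; your argument shows the favorite-good selection rule alone already suffices for an unbounded ratio. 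Both are valid proofs of the theorem as stated.
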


\begin{proof}
    Fix $0 < \epsilon < 1$.
    Suppose that there are $3$ agents and $4$ goods with additive valuation profile $v_1 = [1, 0.6, 0, 0.6]$, $v_2 = [1, 0, 0, \epsilon]$, and $v_3 = [0, 1, 1, 1]$.
    When all agents report truthfully, we demonstrate the intermediate statuses in the execution of Mechanism~\ref{alg:envy-cycle-elimination-with-another-implementation} with profile $(v_1, v_2, v_3)$:
    \begin{enumerate}
        \item Initially, $A = (\emptyset, \emptyset, \emptyset)$, and no edge exists in the envy graph.
        \item \textbf{Iteration $i = 1$:} $j = 1$ and $g = g_1$.
        $A = (\{g_1\}, \emptyset, \emptyset)$, and the envy graph contains edge $2 \to 1$.
        \item \textbf{Iteration $i = 2$:} $j = 2$ and $g = g_4$.
        $A = (\{g_1\}, \{g_4\}, \emptyset)$, and the envy graph contains edges $2 \to 1$ and $3 \to 2$.
        \item \textbf{Iteration $i = 3$:} $j = 3$ and $g = g_2$.
        $A = (\{g_1\}, \{g_4\}, \{g_2\})$, and the envy graph contains edge $2\to 1$.
        \item \textbf{Iteration $i = 4$:} $j = 2$ and $g = g_3$.
        $A = (\{g_1\}, \{g_3, g_4\}, \{g_2\})$, and the envy graph contains edges $2\to 1$ and $3 \to 2$.
    \end{enumerate}
    Thus, the resulting allocation is $(\{g_1\}, \{g_3, g_4\}, \{g_2\})$, and the utility of agent $2$ is $\epsilon$.

    Suppose that agent $2$ manipulates his valuation as $v_2' = [1, \epsilon, 0, 0]$.
    We demonstrate the intermediate statuses in the execution of Mechanism~\ref{alg:envy-cycle-elimination-with-another-implementation} with profile $(v_1, v_2', v_3)$:
    \begin{enumerate}
        \item Initially, $A = (\emptyset, \emptyset, \emptyset)$, and no edge exists in the envy graph.
        \item \textbf{Iteration $i = 1$:} $j = 1$ and $g = g_1$.
        $A = (\{g_1\}, \emptyset, \emptyset)$, and the envy graph contains edge $2 \to 1$.
        \item \textbf{Iteration $i = 2$:} $j = 2$ and $g = g_2$.
        $A = (\{g_1\}, \{g_2\}, \emptyset)$, and the envy graph contains edges $2 \to 1$ and $3 \to 2$.
        \item \textbf{Iteration $i = 3$:} $j = 3$ and $g = g_3$.
        $A = (\{g_1\}, \{g_2\}, \{g_3\})$, and the envy graph contains edge $2\to 1$.
        \item \textbf{Iteration $i = 4$:} $j = 2$ and $g = g_4$.
        $A = (\{g_1\}, \{g_2, g_4\}, \{g_3\})$, and the envy graph contains edges $2\to 1$, $1 \to 2$, and $3 \to 2$.
        Due to the existence of cycle $1 \to 2 \to 1$, bundles $A_1$ and $A_2$ are swapped.
        After that, the allocation becomes $A = (\{g_2, g_4\}, \{g_1\}, \{g_3\})$, and the envy graph contains edge $3 \to 1$.
    \end{enumerate}
    Thus, the resulting allocation is $(\{g_2, g_4\}, \{g_1\}, \{g_3\})$, and the utility of agent $2$ with respect to $v_2$ is $1$.
    Therefore, the incentive ratio of Mechanism~\ref{alg:envy-cycle-elimination-with-another-implementation} is lower bounded by $1 / \epsilon$.
    Since $\epsilon$ can be arbitrarily small, we conclude the proof.
\end{proof}

\end{document}